\documentclass[journal,twocolumn]{IEEEtran}
\ifCLASSINFOpdf
  % \usepackage[pdftex]{graphicx}
  % declare the path(s) where your graphic files are
  % \graphicspath{{../pdf/}{../jpeg/}}
  % and their extensions so you won't have to specify these with
  % every instance of \includegraphics
  % \DeclareGraphicsExtensions{.pdf,.jpeg,.png}
\else
  % or other class option (dvipsone, dvipdf, if not using dvips). graphicx
  % will default to the driver specified in the system graphics.cfg if no
  % driver is specified.
  % \usepackage[dvips]{graphicx}
  % declare the path(s) where your graphic files are
  % \graphicspath{{../eps/}}
  % and their extensions so you won't have to specify these with
  % every instance of \includegraphics
  % \DeclareGraphicsExtensions{.eps}
\fi
\hyphenation{op-tical net-works semi-conduc-tor}

\usepackage[noadjust]{cite}
\usepackage{amsmath}
\usepackage{graphicx}
\usepackage{amssymb}
\usepackage{colortbl}
\usepackage{arydshln}
\usepackage{stfloats}
\usepackage[vlined,boxed,ruled,linesnumbered]{algorithm2e}
\SetKwProg{Fn}{Function}{}{}
\usepackage{algpseudocode}
\usepackage{multirow}
\usepackage{extarrows}
\usepackage{mathrsfs}
\usepackage{upgreek}
\newtheorem{corollary}{Corollary}
\newtheorem{problem}{Problem}
\newtheorem{example}{Example}
\newtheorem{assumption}{Assumption}
\newtheorem{definition}{Definition}
\newtheorem{theorem}{Theorem}
\newtheorem{lemma}{Lemma}
\newtheorem{remark}{Remark}
\newtheorem{proposition}{Proposition}

\newcommand{\weiming}[1]{{\textcolor{black}{#1}}}

\begin{document}
%
% paper title
% Titles are generally capitalized except for words such as a, an, and, as,
% at, but, by, for, in, nor, of, on, or, the, to and up, which are usually
% not capitalized unless they are the first or last word of the title.
% Linebreaks \\ can be used within to get better formatting as desired.
% Do not put math or special symbols in the title.
\title{Run-Time Safety Monitoring of Neural-Network-Enabled Dynamical Systems}
%
%
% author names and IEEE memberships
% note positions of commas and nonbreaking spaces ( ~ ) LaTeX will not break
% a structure at a ~ so this keeps an author's name from being broken across
% two lines.
% use \thanks{} to gain access to the first footnote area
% a separate \thanks must be used for each paragraph as LaTeX2e's \thanks
% was not built to handle multiple paragraphs
%

\author{Weiming Xiang,~\IEEEmembership{Senior Member,~IEEE}% <-this % stops a space
\thanks{Weiming Xiang is with the School of Computer and Cyber Sciences, Augusta University, Augusta,
GA, 30912 USA e-mail: wxiang@augusta.edu.}% <-this % stops a space
}

% note the % following the last \IEEEmembership and also \thanks - 
% these prevent an unwanted space from occurring between the last author name
% and the end of the author line. i.e., if you had this:
% 
% \author{....lastname \thanks{...} \thanks{...} }
%                     ^------------^------------^----Do not want these spaces!
%
% a space would be appended to the last name and could cause every name on that
% line to be shifted left slightly. This is one of those "LaTeX things". For
% instance, "\textbf{A} \textbf{B}" will typeset as "A B" not "AB". To get
% "AB" then you have to do: "\textbf{A}\textbf{B}"
% \thanks is no different in this regard, so shield the last } of each \thanks
% that ends a line with a % and do not let a space in before the next \thanks.
% Spaces after \IEEEmembership other than the last one are OK (and needed) as
% you are supposed to have spaces between the names. For what it is worth,
% this is a minor point as most people would not even notice if the said evil
% space somehow managed to creep in.

% The paper headers
\markboth{IEEE Transactions on Cybernetics,~Vol.XX, No.XX, January~2021}%
{Shell \MakeLowercase{\textit{et al.}}: Bare Demo of IEEEtran.cls for IEEE Journals}
% The only time the second header will appear is for the odd numbered pages
% after the title page when using the twoside option.
% 
% *** Note that you probably will NOT want to include the author's ***
% *** name in the headers of peer review papers.                   ***
% You can use \ifCLASSOPTIONpeerreview for conditional compilation here if
% you desire.

% If you want to put a publisher's ID mark on the page you can do it like
% this:
%\IEEEpubid{0000--0000/00\$00.00~\copyright~2015 IEEE}
% Remember, if you use this you must call \IEEEpubidadjcol in the second
% column for its text to clear the IEEEpubid mark.

% use for special paper notices
%\IEEEspecialpapernotice{(Invited Paper)}

% make the title area
\maketitle

% As a general rule, do not put math, special symbols or citations
% in the abstract or keywords.
\begin{abstract}
Complex dynamical systems rely on the correct deployment and operation of numerous components, with state-of-the-art methods relying on learning-enabled components in various stages of modeling, sensing, and control at both offline and online levels. This paper addresses the run-time safety monitoring problem of dynamical systems embedded with neural network components. A run-time safety state estimator in the form of an interval observer is developed to construct lower-bound and upper-bound of system state trajectories in run time. The developed run-time safety state estimator consists of two auxiliary neural networks derived from the neural network embedded in dynamical systems, and observer gains to ensure the positivity, namely the ability of estimator to bound the system state in run time, and the convergence of the corresponding error dynamics. The design procedure is formulated in terms of a family of linear programming feasibility problems. The developed method is illustrated by a numerical example and is validated with evaluations on an adaptive cruise control system.  
\end{abstract}

% Note that keywords are not normally used for peerreview papers.
\begin{IEEEkeywords}
Dynamical systems, interval observer, neural networks, run-time monitoring.   
\end{IEEEkeywords}

% For peer review papers, you can put extra information on the cover
% page as needed:
% \ifCLASSOPTIONpeerreview
% \begin{center} \bfseries EDICS Category: 3-BBND \end{center}
% \fi
%
% For peerreview papers, this IEEEtran command inserts a page break and
% creates the second title. It will be ignored for other modes.
\IEEEpeerreviewmaketitle

\section{Introduction}
Complex dynamical systems,  for instance,  medical robotic systems,  autonomous vehicles and a  variety of cyber-physical systems (CPS), have been increasingly benefiting from the recent rapid development of machine learning (ML) and artificial intelligence (AI) techniques in various aspects ranging from modeling to control, for instance stabilizing neural network controllers and state observers \cite{wu2014exponential,li2019neural,zhang2016state}, adaptive neural network controllers \cite{niu2017command,niu2019multiple} and a variety of neural network controllers \cite{hunt1992neural}. However, because of the well-known vulnerability of neural networks, those systems equipped with neural networks which are also called neural-network-enabled systems are only restricted to scenarios with the lowest levels of the requirement of safety. As often observed, a slight perturbation that is imposed onto the input of a well-trained neural network would lead to a completely incorrect and unpredictable result \cite{szegedy2013intriguing}. When neural network components are involved in dynamical system models such as neural network controllers applied in feedback channels, there inevitably exist noises and disturbances in output measurements of the system that are fed into the neural network controllers. These undesired but inevitable noises and disturbances may bring significant safety issues to dynamical systems in run-time operation. Moreover, with advanced adversarial machine learning techniques recently developed which can easily attack learning-enabled systems in run time, the safety issue of such systems only becomes worse. Therefore, for the purpose of safety assurance of dynamical systems equipped with neural network components, there is a need to develop safety monitoring techniques that are able to provide us the online information regarding safety properties for neural-network-enabled dynamical systems.  

To assure the safety property of neural networks, there are a few safety verification methods developed recently. These approaches are mostly designed in the framework of offline computation and usually represent high computational complexities and require huge computation resources to conduct safety verification. For instance, the verification problem of a class of neural networks with rectified linear unit (ReLU) activation functions can be formalized as a variety of sophisticated computational problems.  One geometric computational approach based on the manipulation of polytopes is proposed in \cite{xiang2018reachable,xiang2017reachable}  which is able to compute the exact output set of an ReLU neural network. In their latest work \cite{tran2019star,tran2019parallelizable}, a novel Star set is developed to significantly improve the scalability. Optimization-based methods are also developed for verification of ReLU neural networks such as mixed-integer linear programming  (MILP)  approach \cite{lomuscio2017approach,dutta2018output}, linear programming (LP) based approach \cite{ehlers2017formal}, and Reluplex algorithm proposed in \cite{katz2017reluplex} which is stemmed from classical Simplex algorithm. For neural networks with general activation function, a  simulation-based approach is introduced in \cite{xiang2018output} inspired by the maximal sensitivity concept proposed in  \cite{zeng2001sensitivity}. The output reachable set estimation for feedforward neural networks with general activation functions is formulated in terms of a chain of convex optimization problems, and an improved version of the simulation-based approach is developed in the framework of interval arithmetic \cite{xiang2020reachable,xiang2018specification}. These optimization and geometric methods require a substantial computational ability to verify even a simple property of a neural network. For example, some properties in the proposed ACAS Xu neural network in \cite{katz2017reluplex} need even more than 100 hours to complete the verification, which does not meet the real-time requirement of run-time safety monitoring for dynamical systems. 

One way to resolve the real-time challenge of run-time monitoring is to develop more computational efficient verification methods that can be executed sufficiently fast to satisfy the run-time requirement such as the specification-guide method and Star set method do in \cite{xiang2018specification,tran2019star,tran2019parallelizable}. However, these offline methods are essentially with an open-loop computation structure and there always exist computational limitations for these offline algorithms implemented online. On the other hand, inspired by observer design techniques in classical control theory, another way is to design a closed-loop structure of run-time monitoring using the instantaneous measurement of the system, which is illustrated in Figure \ref{fig:runtime}. Recently, interval observer design techniques have been developed to provide lower- and upper-bounds of state trajectories during the system's operation which can be used to conduct run-time monitoring for dynamical systems \cite{gouze2000interval,efimov2016design,briat2016interval,chen2017l1,chen2018state,li2019interval,tang2019interval}. Inspired by the idea of
interval observer methods developed in the framework of positive systems \cite{chen2019reachable,xiang2017stability,shen2016static,briat2013robust}, a novel run-time safety state estimator is developed for neural-network-enabled dynamical systems. The run-time state estimator design consists of two essential elements, the auxiliary neural networks and observer gains. Briefly speaking, the auxiliary neural networks stemmed from the neural network in the original system are designed to deal with neural network components and observer gains are handling system dynamics, ensuring positivity of error states and convergence.  The design process can be formulated in terms of a family of LP feasibility problems. Notably, if the neural network component is driven by measurement instead of system state, the design process is independent with the neural network which makes the developed method applicable for neural-network-enabled systems regardless of the scalability concern for the size of neural networks. 
\begin{figure}
\centering
	\includegraphics[width=7.5cm]{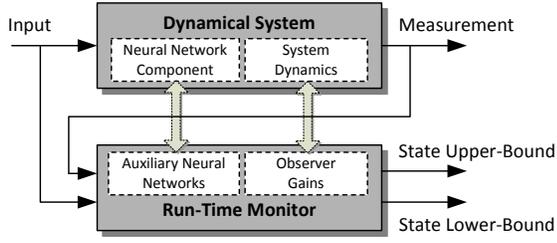}
	\caption{The generic structure of run-time safety monitoring of neural-network-enabled dynamical systems considered in this paper.}\label{fig:runtime}
\end{figure}

The rest of this paper is organized as follows. In Section II, some preliminaries %on neural-network-enabled dynamical systems are introduced 
and problem formulation are introduced.
%the problem of run-time safety monitoring is formulated along with necessary definitions and assumptions. 
The main result, run-time monitoring design, is proposed in Section III. Two auxiliary neural networks are derived from the weights and biases of the neural network of original systems. Interval observers are designed in the framework of LP problems and furthermore, the convergence of the error system is discussed. 
%The design procedure is summarized in an LP-based algorithm. 
In Section IV, the developed approach is applied to an adaptive cruise control (ACC) system. 
%The states of the ACC system can be monitored in run time within the lower- and upper-bounds estimated by the designed observer. 
Conclusions and future remarks are given in Section V. 

\emph{Notations:}  $\mathbb{R}$ and $\mathbb{R}_+$  stand for the sets of real numbers and nonnegative real numbers respectively, and $\mathbb{R}^n$ denotes  the vector space of all $n$-tuples of real numbers, $\mathbb{R}^{n\times n}$  is the space of $n\times n$  matrices with real entries. We denote $I_{n \times n} \in \mathbb{R}^{n \times n}$ as an $n$-dimensional identity matrix and $\mathbf{1}_{n \times 1}=[1,\ldots,1]^{\top} \in \mathbb{R}^{n \times 1}$.
Matrix $A \in \mathbb{R}^{n\times n}$ is a Metzler matrix if its off-diagonal entries are nonnegative, and $\mathbb{M}_n$ denotes the set of the Metzler matrices of the size $n$. For $x \in \mathbb{R}^n$, $x_i$  denotes the $i$th component of $x$, and the notation $x> 0$ means $x_i>0$ for  $1 \le i \le n$.  $\mathbb{R}_ + ^n = \{ x \in {\mathbb{R}^n}:x> 0\} $ denotes the nonnegative orthant in $\mathbb{R}^n$, $\mathbb{R}_ + ^{n \times m}$ denotes the set of $n \times m$ real non-negative matrices. For $x \in \mathbb{R}^n$, its 1-norm is $\left\| x\right\| = \sum\nolimits_{k = 1}^n {\left| {{x_k}} \right|} $. Similarly, for an $A \in \mathbb{R}^{n\times m}$,  $a_{ij}$ denotes the element in the $(i,j)$  position of $A$, and  $A> 0$  means that $a_{ij}>0$  for $1 \le i,j \le n$. $A> B$ means that $A-B> 0$. $\left|A\right|$ means $\left|a_{ij}\right|$ for $1 \le i,j\le n$ and $A^{\top}$ is the transpose of $A$.

\section{System Description and Problem Formulation}
%This section presents the system description of neural-network-enabled dynamical systems and problem formulation of run-time safety monitoring. 

\subsection{Neural-Network-Enabled Dynamical Systems}

In this work, we consider an $L$-layer feedforward neural network $\Phi: \mathbb{R}^{n_0} \to \mathbb{R}^{n_{L}}$ defined by the following recursive equations in the form of 
\begin{align}\label{eq:nn_0}
\begin{cases}
    \eta_{\ell} = \phi_{\ell}(W_{\ell} \eta_{\ell-1}+b_\ell),~\ell = 1,\ldots,L
    \\
    \Phi(\eta_0) =\eta_{L}
    \end{cases}
\end{align}
where $\eta_\ell$ denotes the output of the $\ell$-th layer of the neural network, and in particular $\eta_0\in\mathbb{R}^{n_0}$ is the input to the neural network and $\eta_L\in \mathbb{R}^{n_L}$ is the output produced by the neural network, respectively. $W_\ell \in \mathbb{R}^{n_{\ell}\times n_{\ell-1}}$ and $b_{\ell} \in \mathbb{R}^{n_{\ell}}$ are weight matrices and bias vectors for the $\ell$-th layer.  $\phi_\ell = [\psi_{\ell},\cdots,\psi_{\ell}]$ is the concatenation of activation functions of the $\ell$-th layer in which $\psi_{\ell}:\mathbb{R} \to \mathbb{R}$ is the activation function.  The following assumptions which are related to activation functions are proposed. 
\begin{assumption}\label{assumption_1}
Assume that the following properties holds for activation functions $\psi_\ell$, $\ell = 1,\ldots,L$:
\begin{enumerate}
    \item Given any two scalar $x_1$ and $x_2$, there exists an $\alpha > 0$ such that 
    \begin{equation}\label{eq:assumption_1_1}
        \left|\psi_\ell(x_1) - \psi_\ell(x_2)\right| \le \alpha \left|x_1-x_2\right|,~\forall \ell = 1,\ldots,L .
    \end{equation}
    \item Given any two scalars $x_1 \le x_2$, the following inequality holds
\begin{equation}\label{eq:assumption_1_2}
    \psi_\ell(x_1) \le \psi_{\ell}(x_2),~\forall \ell = 1,\ldots,L .
\end{equation}
\end{enumerate}
\end{assumption}
\begin{remark}
The above two assumptions hold for most popular activation functions such as ReLU, sigmoid, tanh, for instance.  The maximum Lipschitz constant of all $\psi_\ell$ can be chosen as the $\alpha$ for condition (\ref{eq:assumption_1_1}). 
In addition, those popular activation functions are monotonically increasing so that  condition (\ref{eq:assumption_1_2}) is explicitly satisfied. 
\end{remark}

Neural-network-enabled dynamical systems are dynamical systems driven by neural network components such as neural network feedback controllers. In general,  neural-network-enabled dynamical systems are in the form of
\begin{align}\label{eq:system}
\begin{cases}
    \dot x(t)  = f(x(t),u(t),\Phi(x(t),u(t)))
    \\
    y(t) = g(x(t),u(t))
\end{cases}
\end{align}
where $x(t) \in \mathbb{R}^{n_x}$ is the state vector, $u(t) \in \mathbb{R}^{n_u}$ is the system input and $y(t) \in \mathbb{R}^{n_y}$ is the measurement of the system, respectively. $f:\mathbb{R}^{n_x+n_u} \to \mathbb{R}^{n_x}$ and $g:\mathbb{R}^{n_x+n_u} \to \mathbb{R}^{n_y}$ are nonlinear functions.  $\Phi:\mathbb{R}^{n_x+n_u} \to \mathbb{R}^{n_x}$ is the neural network component embedded in the system dynamics. In the rest of this paper, the time index $t$ in some variables may be
omitted for brevity if no ambiguity is introduced. 

In the work, we focus on a class of neural-network-enabled systems with system dynamics in the form of Lipschitz nonlinear model described as
\begin{align}\label{eq:Lipsystem}
\mathfrak{L}:
\begin{cases}
    \dot x  = Ax + f(x) + \Phi(x,u)
    \\
    y = Cx
\end{cases}
\end{align}
where $A \in \mathbb{R}^{n_x \times x_x}$, $C \in \mathbb{R}^{n_y \times n_x}$, and $f(x,u)$ is a Lipschitz nonlinearity satisfying the Lipschitz inequality 
\begin{align} 
    \left\|f(x_1)-f(x_2)\right\| \le \beta \left\|x_1-x_2\right\|,~ \beta > 0.
\end{align}
\begin{remark}
It is worthwhile mentioning that any nonlinear system in the form of $\dot x =
f(x)+\Phi(x,u)$ can be expressed in the form of (\ref{eq:Lipsystem}), as long as $f(x)$
is differentiable with respect to $x$. The neural network $\Phi(x,u)$ is an interval component affecting the system behavior. For example, if $\Phi(x,u)$ is trained as the neural network feedback controller, model (\ref{eq:Lipsystem}) represents a state-feedback closed-loop system. 
\end{remark}

Finally, the nonlinearity $f(x)$ is assumed to have the following property.
\begin{assumption}\label{assumption_2}
It is assumed that there exist functions $\underline{f},\overline{f}:\mathbb{R}^{2n_x} \to \mathbb{R}^{n_x}$ such that 
\begin{align}
    \underline{f}(\underline{x},\overline{x}) \le f(x) \le \overline{f}(\underline{x},\overline{x})
%     \\
 %   f(x)- \underline{f}(\underline{x},\overline{x})  \le \underline{\gamma}_1(x - \underline{x}) + \underline{\gamma}_2(\overline{x}-x)  
 %   \\
 %   \overline{f}(\underline{x},\overline{x}) - f(x) \le \overline{\gamma}_1(x - \underline{x}) + \overline{\gamma}_2(\overline{x}-x)  
\end{align}
holds for any $\underline{x} \le x \le \overline{x}$. 
\end{assumption}

\subsection{Problem Statement}
The run-time safety monitoring problem considered in this paper is to design a run-time safety state estimator $\mathfrak{E}$ which is able to estimate the lower- and upper-bounds of the instantaneous value of $x(t)$ for the purpose of safety monitoring. The information of system $\mathfrak{L}$ that is available for estimator $\mathfrak{E}$ includes:  System matrices $A$, $C$, nonlinearity $f$ and neural network $\Phi$, namely the weight matrices $\{W_\ell\}_{\ell =1}^{L}$ and bias vectors $\{b_\ell\}_{\ell =1}^{L}$, and the $\underline{u}$ and $\overline{u}$ such that input $u(t)$ satisfies $\underline{u} \le u(t)\le \overline{u}$, $\forall t \ge 0$, and the instantaneous value of measurement $y(t)$ in running time.
The run-time safety monitoring problem for neural-network-enabled dynamical system (\ref{eq:Lipsystem}) is summarized as follows.
\begin{problem} \label{problem_1}
Given a neural-network-enabled dynamical system $\mathfrak{L}$ in the form of (\ref{eq:Lipsystem}) with input $u(t)$ satisfying $\underline{u} \le u(t)\le \overline{u}$, $\forall t \ge 0$, how does one design a run-time safety state estimator $\mathfrak{E}$ to reconstruct two instantaneous values $\underline{x}(t)$ and $\overline{x}(t)$ such that $\underline{x}(t) \le x(t) \le \overline{x}(t)$, $\forall t \ge 0$?
\end{problem}

Inspired by interval observers proposed in \cite{gouze2000interval,efimov2016design,briat2016interval,chen2017l1,chen2018state,li2019interval,tang2019interval}, the run-time safety state estimator $\mathfrak{E}$ is developed in the following  Luenberger observer form
\begin{align}
\label{eq:observer}
\mathfrak{E}:
\begin{cases}
    \dot{\underline{x}}  = A\underline{x} + \underline{f}(\underline{x},\overline{x}) + \underline{\Phi}(\underline{x},\overline{x},\underline{u},\overline{u})+\underline{L}(y - C\underline{x})
    \\
    \dot{\overline{x}}  = A\overline{x} + \overline{f}(\underline{x},\overline{x}) + \overline{\Phi}(\underline{x},\overline{x},\underline{u},\overline{u})+\overline{L}(y - C\overline{x})
\end{cases}
\end{align}
where initial states satisfy $\underline{x}(t_0) \le x(t_0) \le \overline{x}(t_0)$ and $\underline{f}$, $\overline{f}$ are functions satisfying Assumption \ref{assumption_2}. Neural networks $\underline{\Phi}$, $\overline{\Phi}$ and observer gains $\underline{L}$, $\overline{L}$ are to be determined. 
Furthermore, letting the error states $\underline{e}(t) = x(t) - \underline{x}(t)$ and $\overline{e}(t) =  \overline{x}(t)-x(t)$, the error dynamics can be obtained as follows:
\begin{align}
\label{eq:errorsystem}
\begin{cases}
    \dot{\underline{e}}  = (A-\underline{L}C)\underline{e} + f(x) - \underline{f}(\underline{x},\overline{x}) + \Phi(x,u)-\underline{\Phi}(\underline{x},\overline{x},\underline{u},\overline{u})
    \\
    \dot{\overline{e}}  = (A-\overline{L}C)\overline{e} + \overline{f}(\underline{x},\overline{x}) - f(x) + \overline{\Phi}(\underline{x},\overline{x},\underline{u},\overline{u})-\Phi(x,u)
\end{cases}
\end{align}
with initial states $\underline{e}(t_0) \ge 0$ and $\overline{e}(t_0) \ge 0$.

The problem of ensuring the run-time value of $x(t)$ satisfying $\underline{x}(t) \le x(t) \le \overline{x}(t)$, $\forall t \ge 0$ is equivalent to the one that the run-time values of error states $\underline{e}(t)$ and $\overline{e}(t)$ are required to be always positive, that is to say, $\underline{e}(t) \ge 0$ and $\overline{e}(t) \ge 0$, $\forall t \ge 0$. 
Thus, with the run-time safety state estimator in the form of (\ref{eq:observer}), the run-time safety monitoring problem for system (\ref{eq:Lipsystem}) can be restated as follows.
\begin{problem}\label{problem_2}
Given a neural-network-enabled dynamical system $\mathfrak{L}$ in the form of (\ref{eq:Lipsystem}) with input $u(t)$ satisfying $\underline{u} \le u(t)\le \overline{u}$, $\forall t \ge 0$, how does one construct proper neural networks $\underline{\Phi}$, $\overline{\Phi}$ and observer gains $\underline{L}$, $\overline{L}$ such that the error states $\underline{e}(t)$, $\overline{e}(t)$ governed by (\ref{eq:errorsystem}) satisfy $\underline{e}(t) \ge 0$ and $\overline{e}(t) \ge 0$, $\forall t \ge 0$? 
\end{problem}

As stated in Problem \ref{problem_2}, the run-time safety monitoring consists of two essential design tasks, neural network design and observer gain design.   
Then, the result about positivity of dynamical systems is recalled by the following lemma. 
\begin{lemma} \cite{efimov2016design}
\label{lemma_1}
Consider a system $\dot z = Mz+p(t)$, $z\in \mathbb{R}^n$, where $M\in \mathbb{M}_{n}$ and $p:\mathbb{R}_+ \to \mathbb{R}^{n}_+$, the system is called cooperative and the solutions of the system satisfy $z(t) \ge 0$, $\forall t \ge 0$ if $z(0) \ge 0$.
\end{lemma}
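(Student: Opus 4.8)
The plan is to use the variation-of-constants formula for the solution of the linear system together with the defining property of Metzler matrices. First I would write the explicit solution $z(t) = e^{Mt} z(0) + \int_{0}^{t} e^{M(t-s)} p(s)\, ds$, valid for any $M \in \mathbb{R}^{n\times n}$ and any locally integrable $p$. The claim then reduces entirely to showing that the transition matrix $e^{Mt}$ is entrywise nonnegative for every $t \ge 0$: once this is established, $e^{Mt} z(0) \ge 0$ follows from $z(0) \ge 0$, and for each $s \in [0,t]$ the integrand $e^{M(t-s)} p(s)$ is the product of a nonnegative matrix with the nonnegative vector $p(s) \in \mathbb{R}^n_+$, hence nonnegative, so the integral is nonnegative as well and therefore $z(t) \ge 0$.

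The key step --- nonnegativity of $e^{Mt}$ --- is precisely where the hypothesis $M \in \mathbb{M}_n$ is used. Since the off-diagonal entries of $M$ are nonnegative, I would choose a scalar $\lambda > 0$ large enough that every entry of $M + \lambda I_{n\times n}$ is nonnegative (any $\lambda \ge \max_i |M_{ii}|$ works, since this only needs $M_{ii} + \lambda \ge 0$). Because $\lambda I_{n\times n}$ commutes with $M$, we may split the exponential as $e^{Mt} = e^{-\lambda t} e^{(M + \lambda I_{n\times n})t}$. The factor $e^{(M + \lambda I_{n\times n})t} = \sum_{k=0}^{\infty} \frac{t^k}{k!}(M + \lambda I_{n\times n})^k$ is a convergent series whose terms are nonnegative-scalar multiples of powers of the nonnegative matrix $M + \lambda I_{n\times n}$, hence each term is nonnegative and so is the sum; multiplying by the positive scalar $e^{-\lambda t}$ preserves nonnegativity. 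Thus $e^{Mt} \ge 0$ for all $t \ge 0$, which completes the argument.

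I expect the only points needing care to be the commutativity justification for the exponential splitting (immediate, since $\lambda I_{n\times n}$ commutes with every matrix) and, for a fully self-contained treatment, the interchange of the infinite sum with the entrywise inequality (routine, from uniform convergence of the exponential series on bounded time intervals). As an alternative I would also mention the direct invariance argument: if $z(0) \ge 0$ but the trajectory were to leave $\mathbb{R}^n_+$, let $t^\ast \ge 0$ be the first time some coordinate vanishes, say $z_i(t^\ast) = 0$ while $z_j(t^\ast) \ge 0$ for all $j$; then $\dot z_i(t^\ast) = M_{ii} \cdot 0 + \sum_{j \ne i} M_{ij} z_j(t^\ast) + p_i(t^\ast) \ge 0$, because $M_{ij} \ge 0$ for $j \ne i$ and $p_i(t^\ast) \ge 0$, so $z_i$ cannot decrease through zero, contradicting the choice of $t^\ast$. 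This shows $\mathbb{R}^n_+$ is forward invariant and gives the same conclusion.
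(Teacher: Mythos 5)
The paper does not prove this lemma at all --- it is imported verbatim from the cited reference \cite{efimov2016design} --- so there is no in-paper argument to compare against. Your proof is correct and is the standard one: variation of constants reduces everything to entrywise nonnegativity of $e^{Mt}$, and the decomposition $e^{Mt}=e^{-\lambda t}e^{(M+\lambda I_{n\times n})t}$ with $\lambda\ge\max_i|M_{ii}|$ turns that into nonnegativity of a convergent series of nonnegative matrices. The only point I would flag is in your supplementary invariance sketch: the inequality $\dot z_i(t^\ast)\ge 0$ at the first touching time does not by itself forbid $z_i$ from crossing zero when $\dot z_i(t^\ast)=0$, so to make that route rigorous one usually passes to the perturbed system $\dot z^{\varepsilon}=Mz^{\varepsilon}+p(t)+\varepsilon\mathbf{1}_{n\times 1}$ (where the derivative at a touching point is strictly positive) and lets $\varepsilon\to 0$. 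Since you present that argument only as an alternative, your main proof stands as a complete and self-contained justification of the cited lemma.
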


Based on Lemma \ref{lemma_1} and owing to Assumption \ref{assumption_2} implying $f(x) - \underline{f}(\underline{x},\overline{x}) \in \mathbb{R}_+^{n_x}$ and $\overline{f}(\underline{x},\overline{x}) -f(x) \in \mathbb{R}_+^{n_x}$, the run-time safety monitoring problem  can be resolved if observer gains $\underline{L}$, $\overline{L}$ and nerual networks $\underline{\Phi}$, $\overline{\Phi}$ satisfy the conditions proposed in the following proposition. 
\begin{proposition}\label{proposition_1}
The run-time safety monitoring Problem \ref{problem_1} is solvable if there exist observer gains $\underline{L}$, $\overline{L}$ and neural networks $\underline{\Phi}$ and $\overline{\Phi}$ such that 
\begin{enumerate}
    \item $A-\underline{L}C \in \mathbb{M}_{n_x}$,  $A-\overline{L}C \in \mathbb{M}_{n_x}$.
    \item $\Phi(x,u)-\underline{\Phi}(\underline{x},\overline{x},\underline{u},\overline{u}) \in \mathbb{R}_+^{n_{x}}$, $\overline{\Phi}(\underline{x},\overline{x},\underline{u},\overline{u})-\Phi(x,u) \in \mathbb{R}^{n_{x}}_+$.
\end{enumerate}
\end{proposition}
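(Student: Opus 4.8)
The plan is to reduce Proposition \ref{proposition_1} to Problem \ref{problem_2} and then apply the positivity result of Lemma \ref{lemma_1} to the error dynamics (\ref{eq:errorsystem}). As already observed, with the estimator $\mathfrak{E}$ taken in the Luenberger form (\ref{eq:observer}) and initial data satisfying $\underline{x}(t_0) \le x(t_0) \le \overline{x}(t_0)$ (equivalently $\underline{e}(t_0)\ge 0$, $\overline{e}(t_0)\ge 0$), the target estimate $\underline{x}(t)\le x(t)\le\overline{x}(t)$ for all $t$ is equivalent to $\underline{e}(t)\ge 0$ and $\overline{e}(t)\ge 0$ for all $t$. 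Hence it suffices to show that conditions 1 and 2 force the error trajectories into the nonnegative orthant.

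First I would cast each line of (\ref{eq:errorsystem}) in the template $\dot z = Mz + p(t)$ of Lemma \ref{lemma_1}: for the lower error, $M = A - \underline{L}C$ and $p(t) = f(x) - \underline{f}(\underline{x},\overline{x}) + \Phi(x,u) - \underline{\Phi}(\underline{x},\overline{x},\underline{u},\overline{u})$, and symmetrically for the upper error with $M = A - \overline{L}C$ and $p(t) = \overline{f}(\underline{x},\overline{x}) - f(x) + \overline{\Phi}(\underline{x},\overline{x},\underline{u},\overline{u}) - \Phi(x,u)$. Condition 1 gives $M\in\mathbb{M}_{n_x}$ in both cases. For the forcing term, the neural-network part is nonnegative by condition 2, and the Lipschitz-nonlinearity part $f(x)-\underline{f}(\underline{x},\overline{x})$ (resp.\ $\overline{f}(\underline{x},\overline{x})-f(x)$) is nonnegative by Assumption \ref{assumption_2} --- but only while the bracketing $\underline{x}\le x\le\overline{x}$ holds.

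That proviso is the one delicate point, and it is where I expect the main obstacle: Assumption \ref{assumption_2} is usable only under $\underline{x}\le x\le\overline{x}$, which is exactly the conclusion sought, so invoking Lemma \ref{lemma_1} directly would be circular. I would remove the circularity with a continuity/first-crossing argument. The errors are continuous and start nonnegative; if the claim failed, there would be a first time $t^\ast$ and a component, say $\underline{e}_i$, with $\underline{e}_i(t^\ast)=0$ while $\underline{e}(t)\ge 0$ and $\overline{e}(t)\ge 0$ throughout $[t_0,t^\ast]$. On $[t_0,t^\ast]$ the bracketing holds, so $p(t)\ge 0$ there; using the Metzler structure, the $i$-th component of the error equation at $t^\ast$ reads $\dot{\underline{e}}_i(t^\ast) = \sum_{j\ne i}(A-\underline{L}C)_{ij}\,\underline{e}_j(t^\ast) + p_i(t^\ast) \ge 0$, since the diagonal contribution vanishes at $\underline{e}_i(t^\ast)=0$ and every remaining term is nonnegative; hence $\underline{e}_i$ cannot turn negative at $t^\ast$, and the symmetric computation rules out a sign change in any component of $\overline{e}$. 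This contradicts the choice of $t^\ast$, giving $\underline{e}(t)\ge 0$ and $\overline{e}(t)\ge 0$ for all $t\ge 0$, i.e.\ $\underline{x}(t)\le x(t)\le\overline{x}(t)$, so Problem \ref{problem_1} is solved. If one prefers to stay at the level of rigor of the cited interval-observer references, the same conclusion can be stated more briefly by treating $p(t)$ in (\ref{eq:errorsystem}) as a nonnegative forcing signal and quoting Lemma \ref{lemma_1} together with Assumption \ref{assumption_2} and conditions 1 and 2, with the crossing argument left to a remark.
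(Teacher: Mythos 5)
Your reduction is the same as the paper's: write each line of (\ref{eq:errorsystem}) as $\dot z = Mz + p(t)$ with $M = A-\underline{L}C$ (resp.\ $A-\overline{L}C$) Metzler by condition 1 and $p(t)\ge 0$ by condition 2 together with Assumption \ref{assumption_2}, then invoke Lemma \ref{lemma_1}. The paper's proof stops there. What you add --- and it is a genuine improvement, not a digression --- is the observation that Assumption \ref{assumption_2} only delivers $f(x)-\underline{f}(\underline{x},\overline{x})\ge 0$ and $\overline{f}(\underline{x},\overline{x})-f(x)\ge 0$ \emph{while} $\underline{x}\le x\le\overline{x}$ holds, which is precisely the conclusion being established, so a naive application of Lemma \ref{lemma_1} is circular. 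Your first-crossing argument is the standard way to break that circularity, and the paper's one-line proof silently assumes it. One small caveat on your closing step: at the first crossing time $t^\ast$ you obtain $\dot{\underline{e}}_i(t^\ast)\ge 0$, which with possible equality does not by itself forbid $\underline{e}_i$ from going negative immediately after $t^\ast$; the textbook repair is either to perturb the forcing to $p(t)+\epsilon\mathbf{1}_{n_x\times 1}$ (so the inequality at the crossing becomes strict) and let $\epsilon\to 0$, or to phrase the step as forward invariance of the nonnegative orthant for cooperative systems, which is exactly what Lemma \ref{lemma_1} packages. With that minor tightening your argument is complete and is in fact more careful than the one printed in the paper.
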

\begin{proof}Due to Assumption \ref{assumption_2}, it implies $f(x)-\underline{f}(\underline{x},\overline{x}) \in \mathbb{R}_x^{n_x}$. Then, owing to $\Phi(x,u)-\underline{\Phi}(\underline{x},\overline{x},\underline{u},\overline{u}) \in \mathbb{R}_+^{n_{x}}$, one can obtain $f(x)-\underline{f}(\underline{x},\overline{x}) +\Phi(x,u)-\underline{\Phi}(\underline{x},\overline{x},\underline{u},\overline{u}) \in \mathbb{R}_+^{n_{x}}$. Together with $A - \underline{L}C \in \mathbb{M}_{n_x}$, it leads to $\underline{e}(t) \ge 0$, $\forall t \ge 0$ according to Lemma \ref{lemma_1}. Same guidelines can be applied to ensure $\overline{e}(t) \ge 0$. The proof is complete.
\end{proof}

The observer gains $\underline{L}$, $\overline{L}$ and neural networks $\underline{\Phi}$, $\overline{\Phi}$ satisfying conditions in Proposition \ref{proposition_1} can ensure the system state $x(t)$ to be bounded by the estimator states $\underline{x}(t)$ and $\overline{x}(t)$, but there is no guarantee on the boundedness and convergence of error state $\underline{e}(t)$ and $\overline{e}(t)$. The values of $\underline{e}(t)$ and $\overline{e}(t)$ may diverge, namely $\lim_{t \to \infty}\underline{e}(t) = \infty$ and $\lim_{t \to \infty}\overline{e}(t) = \infty$, thus make no sense in terms of safety monitoring in practice.  The following notion of practical stability concerned with boundedness of system state is introduced. 
%\begin{definition} \label{def_1}
%The system $\dot x(t) = f(x(t), u(t))$ is globally practically uniformly exponentially stable with a decay rate $\lambda >0 $ if and only if the inequality
%\begin{align}
%    \left\|x\right\| \le Ce^{-\lambda (t-t_0)}\left\|x(t_0)\right\| + r
%\end{align}
%holds for any $x(t_0) \in\mathbb{R}^{n_x}$, any $t \ge t_0$ and constants $C > 0$, $r \le 0$. In particular, if $r=0$, the system  is globally uniformly exponentially stable. 
%\end{definition}
\begin{definition} \cite{vangipuram1990practical}\label{def_1}
Given $(\epsilon, \delta)$ with $0 < \epsilon \le   \delta$. Let $x(t, x(t_0))$, $t \ge t_0$, be a solution of the system $\dot x(t) = f(x(t), u(t))$, then the trivial
solution $x=0$ of the system is said to be practically stable with respect to $(\epsilon, \delta)$  if $\left\|x(t_0)\right\| \le \epsilon$ implies $\left\|x(t)\right\| \le \delta$, $\forall t \ge t_0$. Furthermore, if there is a
$T = T (t_0,\epsilon, \delta) > 0$ such that $\left\|x(t_0)\right\| \le \epsilon$  implies $\left\|x(t)\right\| \le \delta$ for any $t \ge t_0 + T$, then the system is practically asymptotically stable.
\end{definition}
\begin{remark}
Practical stability ensures the boundedness of state trajectories of a dynamical system. If  the inequality
\begin{align}
    \left\|x(t)\right\| \le Ce^{-\lambda (t-t_0)}\left\|x(t_0)\right\| + r
\end{align}
holds for any $x(t_0) \in\mathbb{R}^{n_x}$, any $t \ge t_0$ and constants $C > 0$, $r \le 0$, it implies that the state trajectories of a  dynamical system are bounded in terms of $\left\|x(t)\right\| \le C\left\|x(t_0)\right\| + r$, $\forall t \ge t_0$, and moreover, the state trajectories also converge to ball $\mathcal{B}_r = \{x \in \mathbb{R}^{n_x} \mid \left\|x\right\| \ge r, r \ge 0\}$  exponentially at a decay rate of $\lambda$.  In particular, we call this system  is globally practically uniformly exponentially  stable.
\end{remark}

\section{Run-Time Safety Monitoring Design}
This section aims to design neural networks $\underline{\Phi}$, $\overline{\Phi}$ and observer gains $\underline{L}$, $\overline{L}$ satisfying conditions in Proposition \ref{proposition_1}. Furthermore, the convergence of error state is also analyzed and assured. 
First, we design neural networks $\underline{\Phi}$ and $\overline{\Phi}$ based on the weight matrices $W_{\ell}$ and bias vectors $b_{\ell}$ of neural network $\Phi$ in system (\ref{eq:Lipsystem}).

Given a neural network $\Phi:\mathbb{R}^{n_0} \to \mathbb{R}^{n_L}$ in the form of (\ref{eq:nn_0}) with weight matrices 
\begin{align}
    W_{\ell} = [w_\ell^{i,j}] = \begin{bmatrix}
    w_{\ell}^{1,1} & w_\ell^{1,2} & \cdots & w_{\ell}^{1,n_{\ell-1}}
    \\
    w_{\ell}^{2,1} & w_\ell^{2,2} & \cdots & w_{\ell}^{2,n_{\ell-1}}
    \\
    \vdots & \vdots & \ddots & \vdots
    \\
    w_{\ell}^{n_\ell,1} & w_\ell^{n_\ell,2} & \cdots & w_{\ell}^{n_\ell,n_{\ell-1}}
    \end{bmatrix}
\end{align}
where $w_\ell^{i,j}$ denotes the element in $i$-th row and $j$-th column,  we define two auxiliary weight matrices as below:
\begin{align} \label{eq:W_1}
    \underline{W}_\ell &= [\underline{w}_\ell^{i,j}],~\underline{w}_\ell^{i,j} = \begin{cases}
        w_{\ell}^{i,j} & w_{\ell}^{i,j} < 0
        \\
        0 & w_{\ell}^{i,j} \ge 0
    \end{cases}
    \\
    \label{eq:W_2}
    \overline{W}_\ell& = [\overline{w}_\ell^{i,j}],~\overline{w}_\ell^{i,j} = \begin{cases}
        w_{\ell}^{i,j} & w_{\ell}^{i,j} \ge 0
        \\
        0 & w_{\ell}^{i,j} < 0
    \end{cases}
\end{align} 
for which it is explicit that we have $W_\ell = \underline{W}_\ell+\overline{W}_\ell$. Then, we construct two auxiliary neural networks $\underline{\Phi}:\mathbb{R}^{2n_0} \to \mathbb{R}^{n_L}$, $\overline{\Phi}:\mathbb{R}^{2n_0} \to \mathbb{R}^{n_L}$ with inputs $\underline{\eta}_0, \overline{\eta}_0 \in \mathbb{R}^{n_0}$ in the following form:
\begin{align}\label{eq:nn_1}
&\begin{cases}
\underline{\eta}_{\ell} = \phi_{\ell}(\underline{W}_{\ell} \overline{\eta}_{\ell-1}+\overline{W}_{\ell} \underline{\eta}_{\ell-1}+b_\ell),~\ell = 1,\ldots,L
    \\
\underline{\Phi}(\underline{\eta}_0,\overline{\eta}_0) =\underline{\eta}_{L}
    \end{cases}
    \\
    \label{eq:nn_2}
&  \begin{cases}
\overline{\eta}_{\ell} = \phi_{\ell}(\underline{W}_{\ell} \underline{\eta}_{\ell-1}+\overline{W}_{\ell} \overline{\eta}_{\ell-1}+b_\ell),~\ell = 1,\ldots,L
    \\
\overline{\Phi}(\underline{\eta}_0,\overline{\eta}_0) =\overline{\eta}_{L}
    \end{cases}
\end{align}

Given $\underline{\eta_0} \le \eta_0 \le \overline{\eta_0}$, the following theorem can be derived with auxiliary neural networks in the form of (\ref{eq:nn_1}) and (\ref{eq:nn_2}), which implies the positivity of $\Phi(\eta_0)-\underline{\Phi}(\underline{\eta_0},\overline{\eta}_0)$ and $\overline{\Phi}(\underline{\eta_0},\overline{\eta}_0)-\Phi(\eta_0)$.

\begin{theorem} \label{thm_1}
Given neural networks $\Phi:\mathbb{R}^{n_0} \to \mathbb{R}^{n_L}$ and its two auxiliary neural networks $\underline{\Phi}:\mathbb{R}^{2n_0} \to \mathbb{R}^{n_L}$, $\overline{\Phi}:\mathbb{R}^{2n_0} \to \mathbb{R}^{n_L}$ defined by (\ref{eq:nn_1}) and (\ref{eq:nn_2}), the following condition 
\begin{align}
    \begin{bmatrix}
   \Phi(\eta_0)-\underline{\Phi}(\underline{\eta_0},\overline{\eta}_0)  \\ \overline{\Phi}(\underline{\eta_0},\overline{\eta}_0)-\Phi(\eta_0)
   \end{bmatrix}\in \mathbb{R}_+^{2n_{L}}
\end{align}
holds for any $\underline{\eta}_0 \le \eta_0 \le \overline{\eta}_0$. 
\end{theorem}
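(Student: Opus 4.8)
The plan is to prove, by induction on the layer index $\ell$, the stronger componentwise sandwich $\underline{\eta}_\ell \le \eta_\ell \le \overline{\eta}_\ell$ for every $\ell = 0,1,\ldots,L$, where $\eta_\ell$ are the activations of $\Phi$ on input $\eta_0$ and $\underline{\eta}_\ell,\overline{\eta}_\ell$ are the activations of the auxiliary networks in (\ref{eq:nn_1})--(\ref{eq:nn_2}) on input pair $(\underline{\eta}_0,\overline{\eta}_0)$. Evaluating at $\ell = L$ then gives $\underline{\Phi}(\underline{\eta}_0,\overline{\eta}_0) \le \Phi(\eta_0) \le \overline{\Phi}(\underline{\eta}_0,\overline{\eta}_0)$, i.e. $\Phi(\eta_0)-\underline{\Phi}(\underline{\eta}_0,\overline{\eta}_0) \ge 0$ and $\overline{\Phi}(\underline{\eta}_0,\overline{\eta}_0)-\Phi(\eta_0) \ge 0$, which is precisely the asserted membership in $\mathbb{R}_+^{2n_L}$. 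The base case $\ell = 0$ is the hypothesis $\underline{\eta}_0 \le \eta_0 \le \overline{\eta}_0$.

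For the inductive step, assume $\underline{\eta}_{\ell-1} \le \eta_{\ell-1} \le \overline{\eta}_{\ell-1}$. The key algebraic fact is the sign decomposition $W_\ell = \underline{W}_\ell + \overline{W}_\ell$ from (\ref{eq:W_1})--(\ref{eq:W_2}), in which $\underline{W}_\ell$ has only nonpositive entries and $\overline{W}_\ell$ only nonnegative entries; consequently left multiplication by $\overline{W}_\ell$ preserves the order of vectors while left multiplication by $\underline{W}_\ell$ reverses it. Writing $W_\ell\eta_{\ell-1} = \underline{W}_\ell\eta_{\ell-1} + \overline{W}_\ell\eta_{\ell-1}$ and using $\overline{W}_\ell(\underline{\eta}_{\ell-1}-\eta_{\ell-1}) \le 0 \le \overline{W}_\ell(\overline{\eta}_{\ell-1}-\eta_{\ell-1})$ together with $\underline{W}_\ell(\overline{\eta}_{\ell-1}-\eta_{\ell-1}) \le 0 \le \underline{W}_\ell(\underline{\eta}_{\ell-1}-\eta_{\ell-1})$, I obtain
\begin{align*}
\underline{W}_\ell \overline{\eta}_{\ell-1} + \overline{W}_\ell \underline{\eta}_{\ell-1}
&\le W_\ell \eta_{\ell-1} \\
&\le \underline{W}_\ell \underline{\eta}_{\ell-1} + \overline{W}_\ell \overline{\eta}_{\ell-1}.
\end{align*}
Adding the common bias vector $b_\ell$ to all three expressions preserves these inequalities.

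Finally, applying the activation map $\phi_\ell = [\psi_\ell,\cdots,\psi_\ell]$, which acts entrywise through the scalar function $\psi_\ell$ that is nondecreasing by part~2 of Assumption~\ref{assumption_1} (inequality (\ref{eq:assumption_1_2})), preserves the componentwise order, so that
\begin{align*}
\underline{\eta}_\ell &= \phi_\ell\!\left(\underline{W}_\ell \overline{\eta}_{\ell-1} + \overline{W}_\ell \underline{\eta}_{\ell-1} + b_\ell\right) \\
&\le \phi_\ell\!\left(W_\ell \eta_{\ell-1} + b_\ell\right) = \eta_\ell \\
&\le \phi_\ell\!\left(\underline{W}_\ell \underline{\eta}_{\ell-1} + \overline{W}_\ell \overline{\eta}_{\ell-1} + b_\ell\right) = \overline{\eta}_\ell,
\end{align*}
which closes the induction. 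I do not anticipate a genuine obstacle: the only point requiring care is the bookkeeping of inequality directions under multiplication by the signed matrices $\underline{W}_\ell$ and $\overline{W}_\ell$, and the observation that this argument uses only the monotonicity condition (\ref{eq:assumption_1_2}) of Assumption~\ref{assumption_1}, not the Lipschitz bound (\ref{eq:assumption_1_1}).
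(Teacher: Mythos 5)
Your proof is correct and follows essentially the same route as the paper: the paper also establishes the layerwise implication $\underline{\eta}_{\ell-1}\le\eta_{\ell-1}\le\overline{\eta}_{\ell-1}\Rightarrow\underline{\eta}_{\ell}\le\eta_{\ell}\le\overline{\eta}_{\ell}$ via the sign decomposition $W_\ell=\underline{W}_\ell+\overline{W}_\ell$ and the monotonicity of the activations, then iterates from $\ell=0$ to $\ell=L$. Your write-up is merely a more explicit induction, and your remark that only condition (\ref{eq:assumption_1_2}) is needed (not the Lipschitz bound) is accurate.
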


\begin{proof}
Let us consider the $\ell$-th layer. For any $\underline{\eta}_{\ell-1}\le \eta_{\ell-1} \le \overline{\eta}_{\ell-1}$, it can be obtained from (\ref{eq:W_1}), (\ref{eq:W_2}) such that 
\begin{align*}
   \underline{w}_\ell^{i,j}\overline{\eta}_{\ell-1}^{j} + \overline{w}_\ell^{i,j}\underline{\eta}_{\ell-1}^{j} \le w^{i,j}_\ell \eta_{\ell-1}^{j} \le \underline{w}_\ell^{i,j}\underline{\eta}_{\ell-1}^{j}+\overline{w}_\ell^{i,j}\overline{\eta}_{\ell-1}^{j}
\end{align*}
which implies that 
\begin{align*}
  &W_{\ell}\eta_{\ell-1}+b_\ell
  - (\underline{W}_{\ell} \overline{\eta}_{\ell-1}+\overline{W}_{\ell} \underline{\eta}_{\ell-1}+b_\ell) \ge 0
  \\
 & \underline{W}_{\ell} \underline{\eta}_{\ell-1}+\overline{W}_{\ell} \overline{\eta}_{\ell-1}+b_\ell -(W_{\ell}\eta_{\ell-1}+b_\ell)  \ge  0.
\end{align*}

Under Assumption \ref{assumption_1}, the monotonic property (\ref{eq:assumption_1_2})  of activation function $\phi_\ell$ leads to \begin{align*}
     & \phi_\ell(W_{\ell}\eta_{\ell-1}+b_\ell) -\phi_\ell(\underline{W}_{\ell} \overline{\eta}_{\ell-1}+\overline{W}_{\ell} \underline{\eta}_{\ell-1}+b_\ell)\ge 0   \\ &\phi_{\ell}(\underline{W}_{\ell} \underline{\eta}_{\ell-1}+\overline{W}_{\ell} \overline{\eta}_{\ell-1}+b_\ell) - \phi_\ell(W_{\ell}\eta_{\ell-1}+b_\ell) \ge 0 .
\end{align*}

Using the definitions of neural networks $\Phi$, $\underline{\Phi}$ and $\overline{\Phi}$ described by (\ref{eq:nn_0}), (\ref{eq:nn_1}) and (\ref{eq:nn_2}), namely $\eta_{\ell} = \phi_\ell(W_{\ell}\eta_{\ell-1}+b_\ell)$, $\underline{\eta}_{\ell} =\phi_\ell(\underline{W}_{\ell} \overline{\eta}_{\ell-1}+\overline{W}_{\ell} \underline{\eta}_{\ell-1}+b_\ell)$ and $\overline{\eta}_{\ell}=\phi_{\ell}(\underline{W}_{\ell} \underline{\eta}_{\ell-1}+\overline{W}_{\ell} \overline{\eta}_{\ell-1}+b_\ell)$, 
the above derivation implies that
\begin{align}
   \begin{bmatrix}
   \eta_{\ell-1} - \underline{\eta}_{\ell-1}  \\ \overline{\eta}_{\ell-1} -\eta_{\ell-1} 
   \end{bmatrix} \in \mathbb{R}_+^{2n_{\ell-1}} 
   \Rightarrow \begin{bmatrix}
   \eta_{\ell} - \underline{\eta}_{\ell}  \\ \overline{\eta}_{\ell} -\eta_{\ell} 
   \end{bmatrix} \in \mathbb{R}_+^{2n_{\ell}}  .
\end{align}

Thus, given any $\underline{\eta}_{0} \le \eta_{0} \le \overline{\eta}_{0}$, one can  obtain
\begin{align}
   \begin{bmatrix}
   \eta_{L} - \underline{\eta}_{L}  \\ \overline{\eta}_{L} -\eta_{L} 
   \end{bmatrix}
   =
    \begin{bmatrix}
   \Phi(\eta_0)-\underline{\Phi}(\underline{\eta_0},\overline{\eta}_0)  \\ \overline{\Phi}(\underline{\eta_0},\overline{\eta}_0)-\Phi(\eta_0)
   \end{bmatrix}\in \mathbb{R}_+^{2n_{L}} .
\end{align}
The proof is complete.
\end{proof}

With the two auxiliary neural networks $\underline{\Phi}$, $\overline{\Phi}$, we are ready to design observer gains $\underline{L}$ and $\overline{L}$ to construct run-time safety state estimator $\mathfrak{E}$ in the form of (\ref{eq:observer}) via the following theorem.

\begin{theorem}\label{thm_2}
The safety monitoring Problem \ref{problem_1} is solvable if the following conditions hold for observer gains $\underline{L}$, $\overline{L}$ and neural networks $\underline{\Phi}$ and $\overline{\Phi}$:
\begin{enumerate}
    \item There exist $a \in \mathbb{R}$, $\underline{L}, \overline{L} \in \mathbb{R}^{n_x \times n_y}$ such that 
    \begin{align} \label{eq:thm_2_1}
       A - \underline{L}C &\ge aI_{n_x \times n_x}
       \\
       \label{eq:thm_2_2}
        A - \overline{L}C &\ge aI_{n_x \times n_x} .
    \end{align}
    \item $\underline{\Phi}(\underline{x},\overline{x},\underline{u},\overline{u})$, $\overline{\Phi}(\underline{x},\overline{x},\underline{u},\overline{u})$ are in the form of (\ref{eq:nn_1}) and (\ref{eq:nn_2}) with $\underline{\eta}_0 = [\underline{x}^{\top}, \underline{u}^{\top}]^{\top}$ and $\overline{\eta}_0 = [\overline{x}^{\top}, \overline{u}^{\top}]^{\top}$.
\end{enumerate}
\end{theorem}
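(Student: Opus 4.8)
The plan is to show that the two conditions imposed in Theorem~\ref{thm_2} are sufficient to guarantee the two conditions of Proposition~\ref{proposition_1}, after which the solvability of Problem~\ref{problem_1} is immediate from Proposition~\ref{proposition_1}.

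First I would dispose of the Metzler requirement. Reading $A-\underline{L}C \ge aI_{n_x \times n_x}$ entrywise, every entry of $A-\underline{L}C-aI_{n_x\times n_x}$ is nonnegative; since $aI_{n_x\times n_x}$ has only zero off-diagonal entries, this forces every off-diagonal entry of $A-\underline{L}C$ to be nonnegative, i.e. $A-\underline{L}C \in \mathbb{M}_{n_x}$, and likewise $A-\overline{L}C\in\mathbb{M}_{n_x}$ from (\ref{eq:thm_2_2}). This is condition~1) of Proposition~\ref{proposition_1}. (In fact the existence of a real $a$ with $A-\underline{L}C\ge aI_{n_x\times n_x}$ is equivalent to $A-\underline{L}C$ being Metzler; writing it with a free scalar $a$ is what later casts the gain design as a linear program.)

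Second I would handle the neural-network terms via Theorem~\ref{thm_1}. Condition~2) of Theorem~\ref{thm_2} fixes $\underline{\Phi},\overline{\Phi}$ to be the auxiliary networks (\ref{eq:nn_1})--(\ref{eq:nn_2}) evaluated at $\underline{\eta}_0=[\underline{x}^{\top},\underline{u}^{\top}]^{\top}$ and $\overline{\eta}_0=[\overline{x}^{\top},\overline{u}^{\top}]^{\top}$, while $\Phi$ in (\ref{eq:Lipsystem}) is evaluated at $\eta_0=[x^{\top},u^{\top}]^{\top}$. Because $\underline{u}\le u(t)\le\overline{u}$ is assumed, on any time interval along which $\underline{x}(t)\le x(t)\le\overline{x}(t)$ holds we have $\underline{\eta}_0\le\eta_0\le\overline{\eta}_0$, so Theorem~\ref{thm_1} delivers $\Phi(x,u)-\underline{\Phi}(\underline{x},\overline{x},\underline{u},\overline{u})\in\mathbb{R}_+^{n_x}$ and $\overline{\Phi}(\underline{x},\overline{x},\underline{u},\overline{u})-\Phi(x,u)\in\mathbb{R}_+^{n_x}$, which is condition~2) of Proposition~\ref{proposition_1}. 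Together with Assumption~\ref{assumption_2}, which on the same interval yields $f(x)-\underline{f}(\underline{x},\overline{x})\in\mathbb{R}_+^{n_x}$ and $\overline{f}(\underline{x},\overline{x})-f(x)\in\mathbb{R}_+^{n_x}$, the forcing terms of the error system (\ref{eq:errorsystem}) are nonnegative while its state matrices are Metzler, so Lemma~\ref{lemma_1} returns $\underline{e}(t)\ge0$ and $\overline{e}(t)\ge0$.

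The delicate point, which I expect to be the main obstacle, is that both Theorem~\ref{thm_1} and Assumption~\ref{assumption_2} presuppose $\underline{x}\le x\le\overline{x}$ --- precisely the conclusion being sought --- so the deduction above is a bootstrap rather than a one-shot argument. I would close it with a standard forward-invariance (Nagumo-type) argument: the errors start in $\mathbb{R}_+^{n_x}$ by the initialization in (\ref{eq:observer}); were $t^{\star}$ the first instant at which some component, say $\underline{e}_i$, reaches $0$, then at $t^{\star}$ all other components of $\underline{e}$ remain $\ge0$, the interval hypothesis still holds at $t^{\star}$ by continuity, and therefore $\dot{\underline{e}}_i(t^{\star})=\sum_{j\ne i}(A-\underline{L}C)_{ij}\,\underline{e}_j(t^{\star})+(\text{nonnegative forcing})_i\ge0$ since $A-\underline{L}C$ is Metzler; hence $\underline{e}_i$ cannot cross into the negative orthant, and symmetrically for $\overline{e}$. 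This makes $\{\underline{e}\ge0,\ \overline{e}\ge0\}$ forward invariant for all $t\ge t_0$, equivalently $\underline{x}(t)\le x(t)\le\overline{x}(t)$ for all $t\ge t_0$, which solves Problem~\ref{problem_1}.
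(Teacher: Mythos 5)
Your proposal is correct, and its skeleton coincides with the paper's own proof: read (\ref{eq:thm_2_1})--(\ref{eq:thm_2_2}) entrywise to conclude $A-\underline{L}C,\,A-\overline{L}C\in\mathbb{M}_{n_x}$, invoke Theorem~\ref{thm_1} for the nonnegativity of $\Phi-\underline{\Phi}$ and $\overline{\Phi}-\Phi$, and then hand everything to Proposition~\ref{proposition_1} and Lemma~\ref{lemma_1}. Where you go beyond the paper is in the last third: the paper's proof is a one-line appeal to Proposition~\ref{proposition_1} and never acknowledges that the hypotheses of Theorem~\ref{thm_1} (namely $\underline{\eta}_0\le\eta_0\le\overline{\eta}_0$) and of Assumption~\ref{assumption_2} are only available \emph{while} $\underline{x}\le x\le\overline{x}$ already holds, so that Lemma~\ref{lemma_1} cannot be applied with a globally nonnegative forcing term $p(t)$ in one shot. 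You correctly identify this circularity and close it with a first-crossing-time/forward-invariance argument on the error orthant, which is exactly the bootstrap the statement needs and is the standard way such interval-observer claims are made rigorous in the positive-systems literature. The only residual (and minor) technicality is the usual Nagumo caveat that $\dot{\underline{e}}_i(t^{\star})\ge 0$ at a touching point does not by itself forbid crossing; a fully airtight version perturbs the error dynamics by $\epsilon\mathbf{1}_{n_x\times 1}$ and lets $\epsilon\to 0$, or cites an invariance theorem for vector fields subtangential to the closed orthant. Net effect: your proof buys a logically complete argument at the cost of one extra paragraph, whereas the paper's buys brevity by leaving the bootstrap implicit inside Lemma~\ref{lemma_1}.
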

\begin{proof}
First note that (\ref{eq:thm_2_1}) and (\ref{eq:thm_2_2}) imply that 
    $A-\underline{L}C \in
    \mathbb{M}_{n_x}$, $A-\overline{L}C \in \mathbb{M}_{n_x}$.
Then, from Theorem \ref{thm_1}, it leads to the fact of $\Phi(x,u)-\underline{\Phi}(\underline{x},\overline{x},\underline{u},\overline{u}) \in \mathbb{R}_+^{n_{x}}$, $\overline{\Phi}(\underline{x},\overline{x},\underline{u},\overline{u})-\Phi(x,u) \in \mathbb{R}^{n_{x}}_+$. Based on Proposition \ref{proposition_1}, the error $e(t)$ will be bounded as $\underline{e}(t) \le e(t) \le \overline{e}(t)$, $\forall t \ge 0$, thus the safety monitoring problem is solvable. The proof is complete.
\end{proof}

Theorem \ref{thm_2} provides us a method to design run-time safety state estimator in the interval observer form of (\ref{eq:observer}). The observer gains $\underline{L}$ and $\overline{L}$ can be obtained by solving the linear inequalities (\ref{eq:thm_2_1}), (\ref{eq:thm_2_2}), and neural networks $\underline{\Phi}$ and $\overline{\Phi}$ are determined by (\ref{eq:nn_1}) and (\ref{eq:nn_2}) with weight matrices $\underline{W}_\ell$, $\overline{W}_\ell$ defined by (\ref{eq:W_1}), (\ref{eq:W_2}). The boundedness of $\underline{x}(t) \le x(t) \le \overline{x}(t)$, $\forall t \ge 0$ can be established during the system's operation, however, the boundedness and convergence of error states $\underline{e}(t)$ and $\overline{e}(t)$ are not guaranteed, which means the error dynamics (\ref{eq:errorsystem}) could be unstable. In this case, the estimated bounds $\underline{x}(t)$ and $\overline{x}(t)$ will diverge from system state $x(t)$ to infinite values, and  consequently, the run-time safety monitoring does not make sense in practice. In the following, the convergence of run-time estimation bounds is discussed in the framework of practical stability proposed in Definition \ref{def_1}. 

First, the following assumption is proposed for nonlinearity $f(x)$ and $\underline{f}(\underline{x},\overline{x})$, $\overline{f}(\underline{x},\overline{x})$ mentioned in Assumption \ref{assumption_2}.

\begin{assumption}\label{assumption_3}
It is assumed that there exist scalars $\underline{\gamma}_1$, $\overline{\gamma}_1$, $\underline{\gamma}_2$, $\overline{\gamma}_2 \in \mathbb{R}_+$ and vector $\underline{\rho}, \overline{\rho} \in \mathbb{R}^{n_x}_+$ such that
\begin{align}
    f(x)- \underline{f}(\underline{x},\overline{x})  \le \underline{\gamma}_1(x - \underline{x}) + \underline{\gamma}_2(\overline{x}-x) +\underline{\rho} 
    \\
    \overline{f}(\underline{x},\overline{x}) - f(x) \le \overline{\gamma}_1(x - \underline{x}) + \overline{\gamma}_2(\overline{x}-x)  +\overline{\rho}
\end{align}
holds for $f(x)$, $\underline{f}(\underline{x},\overline{x})$, $\overline{f}(\underline{x},\overline{x})$. 
\end{assumption}

\begin{remark}
These parameters $\underline{\gamma}_1$, $\overline{\gamma}_1$, $\underline{\gamma}_2$, $\overline{\gamma}_2$, $\underline{\rho}$, and  $\overline{\rho}$ in Assumption \ref{assumption_3} can be estimated under Lipschitz condition (6), using the results in \cite{zheng2016design}, i.e., Lemma 6 in \cite{zheng2016design}.
\end{remark}

The following lemma is developed for neural network $\Phi$ and its auxiliary neural networks $\underline{\Phi}$ and $\overline{\Phi}$. 

\begin{lemma}
\label{lemma_2}
Given a feedforward neural network $\Phi:\mathbb{R}^{n_0} \to \mathbb{R}^{n_L}$, there always exist a series of matrices $\underline{S}_\ell, \overline{S}_\ell  \in \mathbb{R}^{n_L\times n_{\ell}}_+$, $\ell = 0,\ldots,L$, with $\underline{S}_L = \overline{S}_L = I_{n_L \times {n_L}}$ such that
\begin{align} \label{eq:lem_2_1}
\alpha\begin{bmatrix}
 \underline{S}_{\ell}\overline{W}_\ell - \overline{S}_{\ell}\underline{W}_{\ell}\\ \overline{S}_\ell\overline{W}_\ell-\underline{S}_{\ell}\underline{W}_{\ell}
\end{bmatrix} &\le 
\begin{bmatrix}
\underline{S}_{\ell-1} \\ \overline{S}_{\ell-1}
\end{bmatrix},~\ell = 1,\ldots,L
%     \alpha \begin{bmatrix}
%   \underline{S}_{\ell}\overline{W}_\ell - \overline{S}_{\ell}\underline{W}_{\ell} \\ \overline{S}_\ell\overline{W}_\ell-\underline{S}_{\ell}\underline{W}_{\ell}
%    \end{bmatrix}
%    \le
%    \begin{bmatrix}
%    \underline{S}_{\ell-1} \\ \overline{S}_{\ell-1}
%    \end{bmatrix},~\ell = 1,\ldots,L
    \\  \label{eq:lem_2_2}
    \overline{\Phi}(\underline{\eta}_0,\overline{\eta}_0)-\underline{\Phi}(\underline{\eta}_0,\overline{\eta}_0) & \le     \underline{S}_{0}(\eta_{0}-\underline{\eta}_{0})+\overline{S}_{0}(\overline{\eta}_{0} - \eta_{0})
   % \\ \label{eq:lem_2_2}
%\Phi(\eta_0) - \underline{\Phi}(\underline{\eta}_0,\overline{\eta}_0) +\overline{\Phi}(\underline{\eta}_0,\overline{\eta}_0) - \Phi(\eta_0) \le     \underline{S}_{0}(\eta_{0}-\underline{\eta}_{0})+\overline{S}_{0}(\overline{\eta}_{0} - \eta_{0})
\end{align}
hold for any $\underline{\eta}_0\le\eta_0 \le \overline{\eta}_0$, where $\alpha$ is the Lipschitz constant of activation functions given in (\ref{eq:assumption_1_1}).
\end{lemma}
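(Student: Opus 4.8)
The plan is to construct the matrices $\underline{S}_\ell,\overline{S}_\ell$ by a backward recursion that makes (\ref{eq:lem_2_1}) hold with equality, and then to derive (\ref{eq:lem_2_2}) by a layerwise telescoping estimate. Concretely, I would set $\underline{S}_L=\overline{S}_L=I_{n_L\times n_L}$ and define, for $\ell=L,L-1,\ldots,1$,
\[
\underline{S}_{\ell-1}:=\alpha\bigl(\underline{S}_\ell\overline{W}_\ell-\overline{S}_\ell\underline{W}_\ell\bigr),\qquad
\overline{S}_{\ell-1}:=\alpha\bigl(\overline{S}_\ell\overline{W}_\ell-\underline{S}_\ell\underline{W}_\ell\bigr).
\]
Since $\overline{W}_\ell\ge 0$ and $\underline{W}_\ell\le 0$ by (\ref{eq:W_1})--(\ref{eq:W_2}), a one-line induction starting from $\underline{S}_L=\overline{S}_L=I\ge 0$ shows $\underline{S}_\ell,\overline{S}_\ell\in\mathbb{R}^{n_L\times n_\ell}_+$ for every $\ell$; with this choice (\ref{eq:lem_2_1}) holds with equality, which already settles the existence part. (In fact the argument below works for any nonnegative $\{\underline{S}_\ell,\overline{S}_\ell\}$ with $\underline{S}_L=\overline{S}_L=I$ satisfying (\ref{eq:lem_2_1}).)

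For the implication from (\ref{eq:lem_2_1}) to (\ref{eq:lem_2_2}) I would first record a one-layer estimate. Fix $\ell$ and suppose $\underline{\eta}_{\ell-1}\le\eta_{\ell-1}\le\overline{\eta}_{\ell-1}$, which by Theorem \ref{thm_1} is automatic once $\underline{\eta}_0\le\eta_0\le\overline{\eta}_0$. As shown inside the proof of Theorem \ref{thm_1}, the affine argument of $\phi_\ell$ producing $\eta_\ell$ dominates the one producing $\underline{\eta}_\ell$ and is dominated by the one producing $\overline{\eta}_\ell$; applying the monotonicity (\ref{eq:assumption_1_2}) to fix the direction and the Lipschitz bound (\ref{eq:assumption_1_1}) to supply the factor $\alpha$, componentwise, and using $W_\ell=\underline{W}_\ell+\overline{W}_\ell$ to rewrite the differences of the affine arguments, one obtains
\[
0\le\eta_\ell-\underline{\eta}_\ell\le\alpha\bigl[\overline{W}_\ell(\eta_{\ell-1}-\underline{\eta}_{\ell-1})-\underline{W}_\ell(\overline{\eta}_{\ell-1}-\eta_{\ell-1})\bigr],
\]
\[
0\le\overline{\eta}_\ell-\eta_\ell\le\alpha\bigl[-\underline{W}_\ell(\eta_{\ell-1}-\underline{\eta}_{\ell-1})+\overline{W}_\ell(\overline{\eta}_{\ell-1}-\eta_{\ell-1})\bigr].
\]
Left-multiplying these by the nonnegative matrices $\underline{S}_\ell$ and $\overline{S}_\ell$ respectively, adding, and then invoking (\ref{eq:lem_2_1}) together with $\eta_{\ell-1}-\underline{\eta}_{\ell-1}\ge 0$ and $\overline{\eta}_{\ell-1}-\eta_{\ell-1}\ge 0$ gives
\[
\underline{S}_\ell(\eta_\ell-\underline{\eta}_\ell)+\overline{S}_\ell(\overline{\eta}_\ell-\eta_\ell)\le\underline{S}_{\ell-1}(\eta_{\ell-1}-\underline{\eta}_{\ell-1})+\overline{S}_{\ell-1}(\overline{\eta}_{\ell-1}-\eta_{\ell-1}).
\]

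Finally I would chain this estimate. At $\ell=L$ the boundary choice $\underline{S}_L=\overline{S}_L=I$ gives the identity $\overline{\eta}_L-\underline{\eta}_L=\underline{S}_L(\eta_L-\underline{\eta}_L)+\overline{S}_L(\overline{\eta}_L-\eta_L)$; applying the one-layer estimate successively for $\ell=L,L-1,\ldots,1$ telescopes this down to $\overline{\eta}_L-\underline{\eta}_L\le\underline{S}_0(\eta_0-\underline{\eta}_0)+\overline{S}_0(\overline{\eta}_0-\eta_0)$, which is exactly (\ref{eq:lem_2_2}) after recalling $\underline{\Phi}(\underline{\eta}_0,\overline{\eta}_0)=\underline{\eta}_L$ and $\overline{\Phi}(\underline{\eta}_0,\overline{\eta}_0)=\overline{\eta}_L$.

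I expect the one-layer estimate to be the only delicate point: one must track signs carefully so that combining monotonicity (which controls the direction) with the Lipschitz inequality (which supplies the $\alpha$-factor) yields a bound that is \emph{linear and nonnegative} in the two quantities $\eta_{\ell-1}-\underline{\eta}_{\ell-1}\ge 0$ and $\overline{\eta}_{\ell-1}-\eta_{\ell-1}\ge 0$, with coefficient matrices built only from $\underline{W}_\ell$ and $\overline{W}_\ell$. Left-multiplying by $\underline{S}_\ell,\overline{S}_\ell$ and comparing through (\ref{eq:lem_2_1}) is then legitimate precisely because every matrix and vector involved carries a definite sign, which is guaranteed by the recursive construction of the $S$-matrices together with Theorem \ref{thm_1}.
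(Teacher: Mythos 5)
Your proposal is correct and takes essentially the same route as the paper's proof: the same backward recursion for $\underline{S}_\ell,\overline{S}_\ell$ (the paper merely adds a harmless slack term $\epsilon\mathbf{1}\mathbf{1}^{\top}$ to the recursion, while you note, correctly, that the equality version is already nonnegative by induction), the same one-layer estimate combining the Lipschitz bound with the sign information from Theorem \ref{thm_1}, and the same telescoping from $\ell=L$ down to $\ell=0$.
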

\begin{proof}
Starting from $\underline{S}_L = \overline{S}_L = I_{n_L \times {n_L}}$, we can recursively define
\begin{align}
    \underline{S}_{\ell-1} = \alpha( \underline{S}_{\ell}\overline{W}_\ell - \overline{S}_{\ell}\underline{W}_{\ell}) + \epsilon \mathbf{1}_{n_L \times 1} \mathbf{1}_{ n_{\ell-1}\times 1}^{\top} \label{eq:lem_2_3}
    \\
    \overline{S}_{\ell-1} =\alpha(\overline{S}_\ell\overline{W}_\ell-\underline{S}_{\ell}\underline{W}_{\ell}) + \epsilon \mathbf{1}_{n_L \times 1} \mathbf{1}_{ n_{\ell-1} \times 1}^{\top} \label{eq:lem_2_4}
\end{align}
where $\epsilon >0$ could be any positive value. Thus, there always exist $\underline{S}_{\ell}$, $\overline{S}_{\ell}$, $\ell = 0,\ldots,L$ such that (\ref{eq:lem_2_1}) holds. 

Then, we are going to establish (\ref{eq:lem_2_2}). We consider the $\ell$-th layer $ \eta_{\ell} = \phi_{\ell}(W_{\ell} \eta_{\ell-1}+b_\ell)$. 
Under Assumption \ref{assumption_1}, it implies 
\begin{align}
\eta_\ell - \underline{\eta}_\ell %\nonumber
 =& \phi_{\ell}({W}_{\ell} \eta_{\ell-1}+b_\ell) -\phi_{\ell}(\underline{W}_{\ell} \overline{\eta}_{\ell-1}+\overline{W}_{\ell} \underline{\eta}_{\ell-1}+b_\ell) \nonumber 
\\
\le& \alpha \left|{W}_{\ell} \eta_{\ell-1}+b_\ell-\underline{W}_{\ell} \overline{\eta}_{\ell-1}-\overline{W}_{\ell} \underline{\eta}_{\ell-1}-b_\ell\right| \label{eq:lem_2_5}
\end{align}

Following the same guideline in the proof of Theorem \ref{thm_1}
%and using the fact of $W_\ell = \underline{W}_\ell+\overline{W}_\ell$,
one obtains 
\begin{align} \label{eq:lem_2_6}
    &{W}_{\ell} \eta_{\ell-1}+b_\ell -\underline{W}_{\ell} \overline{\eta}_{\ell-1}-\overline{W}_{\ell} \underline{\eta}_{\ell-1}-b_\ell \ge 0 
    %\\
    %&{W}_{\ell} \eta_{\ell-1}+b_\ell -( \underline{W}_{\ell} \underline{\eta}_{\ell-1}+\overline{W}_{\ell} \overline{\eta}_{\ell-1}+b_\ell) \ge 0  \label{eq:lem_2_7}
\end{align}
and using the fact of $W_\ell = \underline{W}_\ell+\overline{W}_\ell$, inequality (\ref{eq:lem_2_5}) equals 
\begin{align}
\eta_\ell - \underline{\eta}_\ell
\le& \alpha \overline{W}_{\ell} (\eta_{\ell-1}-\underline{\eta}_{\ell-1}) - \alpha \underline{W}_\ell(\overline{\eta}_{\ell-1}-\eta_{\ell-1}) \nonumber
\\
=& 
\begin{bmatrix}
\alpha \overline{W}_{\ell} & - \alpha \underline{W}_\ell
\end{bmatrix}
\begin{bmatrix}
\eta_{\ell-1}-\underline{\eta}_{\ell-1}
\\
\overline{\eta}_{\ell-1}-\eta_{\ell-1}
\end{bmatrix} . \label{eq:lem_2_8}
\end{align}

Similarly, one can obtain
\begin{align}
\overline{\eta}_\ell - \eta_\ell
\le \begin{bmatrix}
-\alpha \underline{W}_{\ell} & \alpha\overline{W}_{\ell}
\end{bmatrix}
\begin{bmatrix}
\eta_{\ell-1} - \underline{\eta}_{\ell-1}
\\
\overline{\eta}_{\ell-1}-\eta_{\ell-1}
\end{bmatrix} . \label{eq:lem_2_9}
\end{align}

Based on inequalities
(\ref{eq:lem_2_8}) and (\ref{eq:lem_2_9}), the following inequality can be established
\begin{align}
  &  \underline{S}_{\ell}(\eta_{\ell}-\underline{\eta}_{\ell})+\overline{S}_{\ell}(\overline{\eta}_{\ell} - \eta_{\ell}) \nonumber
    \\
    &\le \alpha\begin{bmatrix}
 \underline{S}_{\ell}\overline{W}_\ell - \overline{S}_{\ell}\underline{W}_{\ell} & \overline{S}_\ell\overline{W}_\ell-\underline{S}_{\ell}\underline{W}_{\ell}
\end{bmatrix}
\begin{bmatrix}
\eta_{\ell-1}-\underline{\eta}_{\ell-1}
\\
\overline{\eta}_{\ell-1}-\eta_{\ell-1}
\end{bmatrix} \label{eq:lem_2_10} \nonumber
\end{align}
with any $\underline{S}_\ell, \overline{S}_\ell  \in \mathbb{R}^{n_L\times n_{\ell}}_+$. 

Due to (\ref{eq:lem_2_2}) which always holds with existence of $\underline{S}_{\ell}$, $\overline{S}_{\ell}$, $\ell = 0,\ldots,L$ as proved by (\ref{eq:lem_2_3}) and (\ref{eq:lem_2_4}), the above inequality ensures 
\begin{align*}
   & \underline{S}_{\ell}(\eta_{\ell}-\underline{\eta}_{\ell})+\overline{S}_{\ell}(\overline{\eta}_{\ell} - \eta_{\ell}) 
\\    
    \le& \begin{bmatrix}
 \underline{S}_{\ell-1} & \overline{S}_{\ell-1}
\end{bmatrix}
\begin{bmatrix}
\eta_{\ell-1}-\underline{\eta}_{\ell-1}
\\
\overline{\eta}_{\ell-1}-\eta_{\ell-1}
\end{bmatrix}
\\
=&\underline{S}_{\ell-1}(\eta_{\ell-1}-\underline{\eta}_{\ell-1})+\overline{S}_{\ell-1}(\overline{\eta}_{\ell-1} - \eta_{\ell-1})
\end{align*}
which can be iterated to yield
\begin{align}
    \underline{S}_{L}(\eta_{L}-\underline{\eta}_{L})+\overline{S}_{L}(\overline{\eta}_{L} - \eta_{L})  \le     \underline{S}_{0}(\eta_{0}-\underline{\eta}_{0})+\overline{S}_{0}(\overline{\eta}_{0} - \eta_{0}) . \nonumber
\end{align}

Owing to the fact of $\eta_L = \Phi(\eta_0)$, $\underline{\eta}_L = \underline{\Phi}(\underline{\eta}_0,\overline{\eta}_0)$, $\overline{\eta}_L = \overline{\Phi}(\underline{\eta}_0,\overline{\eta}_0)$ and $\underline{S}_L = \overline{S}_L = I_{n_L \times n_L}$, the following inequality can be established
\begin{equation}
    \overline{\Phi}(\underline{\eta}_0,\overline{\eta}_0) - \underline{\Phi}(\underline{\eta}_0,\overline{\eta}_0) \le     \underline{S}_{0}(\eta_{0}-\underline{\eta}_{0})+\overline{S}_{0}(\overline{\eta}_{0} - \eta_{0})
\end{equation}
for any $\underline{\eta}_0 \le \eta_0 \le \overline{\eta}_0$.
The proof is complete.
\end{proof}
\begin{remark} Lemma \ref{lemma_2} ensures the existence  of $\underline{S}_{\ell}$, $\overline{S}_{\ell}$, $\ell = 0,\ldots,L$ such that the input-output relationship in the description of (\ref{eq:lem_2_2}) holds for  auxiliary neural networks $\underline{\Phi}$ and $\overline{\Phi}$. It also provides a method to compute $\underline{S}_{\ell}$, $\overline{S}_{\ell}$, $\ell = 0,\ldots,L$, that is solving linear inequality (\ref{eq:lem_2_1}) with initialized $\underline{S}_L = \overline{S}_L = I_{n_L \times {n_L}}$. In practice, optimal solution of $\underline{S}_0$, $\overline{S}_0$ such as $\min ~\mathrm{trace(diag}\{\underline{S}_0,\overline{S}_0\})$ is of interest. With respect to objective functions of interest, optimization problems such as linear programming (LP) problems can be formulated to compute $\underline{S}_0 = \overline{S}_0$. For instance, the following LP problem can be formulated
\begin{align}
   &  \min ~\mathrm{trace(diag}\{\underline{S}_0,\overline{S}_0\})
 \nonumber   \\
    \mathrm{s.t.~} &  \alpha\begin{bmatrix}
 \underline{S}_{\ell}\overline{W}_\ell - \overline{S}_{\ell}\underline{W}_{\ell} \nonumber\\ \overline{S}_\ell\overline{W}_\ell-\underline{S}_{\ell}\underline{W}_{\ell}
\end{bmatrix} \le 
\begin{bmatrix}
\underline{S}_{\ell-1} \\ \overline{S}_{\ell-1}
\end{bmatrix},~\ell = 1,\ldots,L \nonumber
\\
& \underline{S}_L = \overline{S}_L = I_{n_L \times {n_L}} . \label{eq:LP}
\end{align}
\end{remark}

Based on Lemma \ref{lemma_2}, we are ready to derive the following result to ensure the boundedness and convergence of run-time error states, namely the practical stability of error system (\ref{eq:errorsystem}). Before presenting the result, we assume that the input vector of neural network component $\Phi(\eta_0)$ is $\eta_0 = [x, u] \in \mathbb{R}^{n_x+n_u}$. 

\begin{theorem} \label{thm3}
Consider error system (\ref{eq:errorsystem}), if there exist a diagonal matrix $X \in \mathbb{R}^{n_x \times n_x}$, matrices $\underline{Y},\overline{Y} \in \mathbb{R}^{n_x \times n_y}$ and a scalar $a \in \mathbb{R}$ such that 
\begin{align}
\label{eq:thm_3_1}
X \mathbf{1 }_{n_x \times 1} &> 0 
\\
\label{eq:thm_3_2}
XA-\underline{Y}C &> aI_{n_x \times n_x}
\\
\label{eq:thm_3_3}
XA-\overline{Y}C &> aI_{n_x \times n_x}
\\
\label{eq:thm_3_4}
\begin{bmatrix}
A^{\top}X-C^{\top}\underline{Y}^{\top} +\underline{\gamma} X+\underline{U}^{\top}X
\\
A^{\top}X-C^{\top}\overline{Y}^{\top} +\overline{\gamma} X+\overline{U}^{\top}X
 \end{bmatrix} \mathbf{1}_{n_x \times 1}& < 0
\end{align}
where $\underline{\gamma} = \underline{\gamma}_1+\underline{\gamma}_2$, $\overline{\gamma} = \overline{\gamma}_1+\overline{\gamma}_2$ and $\underline{U}$, $\overline{U}$ are defined by $\underline{S}_0 = [\underline{U},~\underline{V}]$ and $\overline{S}_0 =[\overline{U},~\overline{V}]$ where $\underline{U},\overline{U} \in \mathbb{R}^{n_x \times n_x}$, $\underline{V},\overline{V} \in \mathbb{R}^{n_x \times n_u}$, and  $\underline{S}_0,\overline{S}_0 \in \mathbb{R}^{n_x \times (n_x+n_u)}$ are the solution of the following conditions:
\begin{align}
\label{eq:thm_3_6}
\alpha\begin{bmatrix}
 \underline{S}_{\ell}\overline{W}_\ell - \overline{S}_{\ell}\underline{W}_{\ell}\\ \overline{S}_\ell\overline{W}_\ell-\underline{S}_{\ell}\underline{W}_{\ell}
\end{bmatrix} &\le 
\begin{bmatrix}
\underline{S}_{\ell-1} \\ \overline{S}_{\ell-1}
\end{bmatrix},~\ell = 1,\ldots,L
\end{align}
with $\underline{S}_L =\overline{S}_L= I_{n_x \times n_x}$, then the error system (\ref{eq:errorsystem}) is globally practically uniformly exponentially stable with observer gains $\underline{L}=X^{-1}\underline{Y}$ and $\overline{L} = X^{-1}\overline{Y}$.
\end{theorem}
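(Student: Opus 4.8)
\emph{Proof proposal.}
My plan is to certify practical exponential stability of the error system (\ref{eq:errorsystem}) through a \emph{linear copositive} Lyapunov function whose derivative reduces exactly to the two rows of (\ref{eq:thm_3_4}). First I would establish positivity of the error: since $X$ is diagonal with $X\mathbf 1_{n_x\times 1}>0$ it is invertible with $X^{-1}$ diagonal and positive, so with $\underline L=X^{-1}\underline Y$, $\overline L=X^{-1}\overline Y$ left--multiplying (\ref{eq:thm_3_2})--(\ref{eq:thm_3_3}) by $X^{-1}$ gives $A-\underline LC>aX^{-1}$, $A-\overline LC>aX^{-1}$, whose off--diagonal entries are nonnegative; hence $A-\underline LC,\ A-\overline LC\in\mathbb M_{n_x}$. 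Combined with Theorem \ref{thm_1} (applied with $\eta_0=[x^\top,u^\top]^\top$, $\underline\eta_0=[\underline x^\top,\underline u^\top]^\top$, $\overline\eta_0=[\overline x^\top,\overline u^\top]^\top$, which yields $\Phi(x,u)-\underline\Phi,\ \overline\Phi-\Phi(x,u)\in\mathbb R_+^{n_x}$ whenever $\underline x\le x\le\overline x$) this is exactly what Proposition \ref{proposition_1} requires, so $\underline e(t)\ge 0$ and $\overline e(t)\ge 0$ for all $t\ge t_0$. This nonnegativity is used throughout to keep inequalities pointing the right way.

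Next I would take $p:=X\mathbf 1_{n_x\times1}>0$ and the candidate $V(t):=p^\top\big(\underline e(t)+\overline e(t)\big)$. Because $p>0$ and $\underline e,\overline e\ge 0$, $V$ is sandwiched between positive multiples of $\|\underline e\|+\|\overline e\|$, so it suffices to bound $V$. Adding the two equations of (\ref{eq:errorsystem}) makes the explicit $f(x)$ and $\Phi(x,u)$ cancel: the $f$--terms become $\overline f(\underline x,\overline x)-\underline f(\underline x,\overline x)$ and the $\Phi$--terms become $\overline\Phi-\underline\Phi$, which is precisely the quantity controlled by Lemma \ref{lemma_2}. I then bound: (i) by Assumptions \ref{assumption_2} and \ref{assumption_3}, using $\underline e,\overline e\ge 0$ to merge the cross gains, $\overline f-\underline f$ is dominated by $\underline\gamma\,\underline e+\overline\gamma\,\overline e+(\underline\rho+\overline\rho)$; (ii) by Lemma \ref{lemma_2} with $\underline S_0=[\underline U,\underline V]$, $\overline S_0=[\overline U,\overline V]$ and the identifications $\eta_0-\underline\eta_0=[\underline e^\top,(u-\underline u)^\top]^\top$, $\overline\eta_0-\eta_0=[\overline e^\top,(\overline u-u)^\top]^\top$, one gets $\overline\Phi-\underline\Phi\le\underline U\underline e+\overline U\overline e+\underline V(u-\underline u)+\overline V(\overline u-u)$; (iii) $0\le u-\underline u\le\overline u-\underline u$ and $0\le\overline u-u\le\overline u-\underline u$ turn the input terms into a constant vector. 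Left--multiplying each inequality by $p^\top\ge 0$ preserves its direction, and collecting coefficients of $\underline e$ and $\overline e$, and using $(A-\underline LC)^\top X=A^\top X-C^\top\underline Y^\top$ (valid since $X$ is diagonal and $\underline Y=X\underline L$), yields
\[
\dot V\ \le\ \underline e^\top\big(A^\top X-C^\top\underline Y^\top+\underline\gamma X+\underline U^\top X\big)\mathbf 1\ +\ \overline e^\top\big(A^\top X-C^\top\overline Y^\top+\overline\gamma X+\overline U^\top X\big)\mathbf 1\ +\ c ,
\]
with $c:=p^\top(\underline\rho+\overline\rho)+p^\top(\underline V+\overline V)(\overline u-\underline u)\ge 0$.

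By (\ref{eq:thm_3_4}) the two vectors in parentheses are entrywise strictly negative, so since $p>0$ there is $\lambda>0$ with each $\le-\lambda p$; as $\underline e,\overline e\ge 0$ this gives $\dot V\le-\lambda\,p^\top\underline e-\lambda\,p^\top\overline e+c=-\lambda V+c$. Integrating this scalar comparison inequality gives $V(t)\le e^{-\lambda(t-t_0)}V(t_0)+\tfrac{c}{\lambda}\big(1-e^{-\lambda(t-t_0)}\big)$, and converting back through the norm equivalence of the previous step produces $\big\|[\underline e(t)^\top,\overline e(t)^\top]^\top\big\|\le Ce^{-\lambda(t-t_0)}\big\|[\underline e(t_0)^\top,\overline e(t_0)^\top]^\top\big\|+r$ for suitable $C>0$, $r\ge 0$, which is exactly global practical uniform exponential stability in the sense of the Remark after Definition \ref{def_1}. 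The gains are $\underline L=X^{-1}\underline Y$, $\overline L=X^{-1}\overline Y$, and (\ref{eq:thm_3_1})--(\ref{eq:thm_3_6}) are linear in $X$ (diagonal), $\underline Y$, $\overline Y$, $a$, $\{\underline S_\ell,\overline S_\ell\}$, so the design is an LP feasibility problem.

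The main obstacle is the derivative computation: steering the messy bound on $\dot V$ onto the \emph{exact} two vectors of (\ref{eq:thm_3_4}). Three points need care: (a) Lemma \ref{lemma_2} bounds only the sum $\overline\Phi-\underline\Phi$, which forces the use of $\dot{\underline e}+\dot{\overline e}$ and hence a single copositive functional rather than two; (b) the cross gains $\underline\gamma_2,\overline\gamma_1$ in Assumption \ref{assumption_3} must be absorbed into $\underline\gamma=\underline\gamma_1+\underline\gamma_2$ and $\overline\gamma=\overline\gamma_1+\overline\gamma_2$, which works because $\underline e,\overline e\ge 0$ and because those parameters have the symmetric form delivered by Lemma~6 of \cite{zheng2016design}; and (c) the reduction $(A-\underline LC)^\top X=A^\top X-C^\top\underline Y^\top$ and the sign--preserving left--multiplication by $p^\top$ both hinge on $X$ being diagonal with $X\mathbf 1>0$. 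Once these are handled, positivity, the norm equivalence, and the Gr\"onwall-type integration are routine.
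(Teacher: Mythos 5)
Your proof is correct and follows essentially the same route as the paper: the same copositive linear Lyapunov function $V=(X\mathbf{1}_{n_x\times 1})^{\top}(\underline{e}+\overline{e})$, the same use of Assumption \ref{assumption_3} and Lemma \ref{lemma_2} to bound the Lipschitz and neural-network terms, and the same reduction of the bound to condition (\ref{eq:thm_3_4}) followed by a Gr\"onwall-type integration yielding $\dot V\le-\lambda V+\theta$. Your explicit preliminary step deriving $\underline{e},\overline{e}\ge 0$ from (\ref{eq:thm_3_2})--(\ref{eq:thm_3_3}) via the Metzler property is a useful clarification of a sign argument the paper leaves implicit.
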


\begin{proof}
First, we construct the co-positive Lyapunov function candidate $V(\underline{e},\overline{e})= \underline{V}(\underline{e})+\overline{V}(\overline{e})$, where $\underline{V}(\underline{e}) = \underline{e}^{\top}v$, $\overline{V}(\overline{e})=\overline{e}^{\top}v$ with $v = X\mathbf{1}_{n_x \times 1} \in \mathbb{R}_+^{n_x}$.

Considering $\underline{V}(\underline{e})$, one can obtain
\begin{align} \label{eq:thm3_pf_1}
    \dot{\underline{V}}(\underline{e}) 
  = \underline{e}^{\top}(A^{\top}-C^{\top}\underline{L}^{\top})v + \underline{\mathcal{F}}^{\top}v+\underline{\mathcal{G}}^{\top}v
\end{align}
where $\underline{\mathcal{F}} = f(x) - \underline{f}(\underline{x},\overline{x})$, $\underline{\mathcal{G}} = \Phi(x,u)-\underline{\Phi}(\underline{x},\overline{x},\underline{u},\overline{u})$.

Under Assumption \ref{assumption_3}, it implies that
\begin{align} \label{eq:thm3_pf_2}
    \underline{\mathcal{F}}^{\top}v \le \underline{\gamma}_1 \underline{e}^{\top}v+  \underline{\gamma}_2\overline{e}^{\top}v + \underline{\rho}^{\top}v .
\end{align}
Thus, we have
\begin{align} \label{eq:thm3_pf_3}
    \dot{\underline{V}}(\underline{e}) 
  \le \underline{e}^{\top}\underline{\Theta}v + \underline{\rho}^{\top}v+\underline{\mathcal{G}}^{\top}v
\end{align}
where $\underline{\Theta} = A^{\top}-C^{\top}\underline{L}^{\top} +(\underline{\gamma}_1 +\underline{\gamma}_2)I_{n_x \times n_x}$. 

Similarly, the following inequality can be obtained for ${\overline{V}(\overline{e})}$:
\begin{align} \label{eq:thm3_pf_4}
    \dot{\overline{V}}(\overline{e}) 
  \le \overline{e}^{\top}\overline{\Theta}v + \overline{\rho}^{\top}v+\overline{\mathcal{G}}^{\top}v
\end{align}
where $\overline{\Theta} = A^{\top}-C^{\top}\overline{L}^{\top} +(\overline{\gamma}_1 +\overline{\gamma}_2)I_{n_x \times n_x}$ and $\overline{\mathcal{G}} = \overline{\Phi}(\underline{x},\overline{x},\underline{u},\overline{u}) - \Phi(x,u)$. 

From (\ref{eq:thm3_pf_3}) and (\ref{eq:thm3_pf_4}), we have 
\begin{align} \label{eq:thm3_pf_5}
    \dot V(\underline{e},\overline{e}) 
    \le
    \begin{bmatrix}
    \underline{e}^{\top} & \overline{e}^{\top}
    \end{bmatrix}\begin{bmatrix}
    \underline{\Theta} 
    \\
     \overline{\Theta}
    \end{bmatrix}
    v + (\underline{\rho}^{\top}+\overline{\rho}^{\top})v
    + \mathcal{G}^{\top}v
\end{align}
where $\mathcal{G} = \underline{\mathcal{G}}+\overline{\mathcal{G}}$. 

Due to $\mathcal{G} = \underline{\mathcal{G}}+\overline{\mathcal{G}} = \overline{\Phi}(\underline{x},\overline{x},\underline{u},\overline{u})-\underline{\Phi}(\underline{x},\overline{x},\underline{u},\overline{u})$ and using  (\ref{eq:thm_3_6}) based on Lemma \ref{lemma_2}, it leads to
\begin{align}
    \mathcal{G} \le &\begin{bmatrix}
\underline{U} & \underline{V}
\end{bmatrix} \left(\begin{bmatrix}
x
\\
u
\end{bmatrix} - \begin{bmatrix}
\underline{x}
\\
\underline{u}
\end{bmatrix} 
\right)+
\begin{bmatrix}
\overline{U} & \overline{V}
\end{bmatrix}
\left(\begin{bmatrix}
\overline{x}
\\
\overline{u}
\end{bmatrix} - \begin{bmatrix}
{x}
\\
{u}
\end{bmatrix} 
\right) \nonumber
\\ 
= & \underline{U}\underline{e}+\underline{V}(u-\underline{u})+\overline{U}\overline{e}+\overline{V}(\overline{u}-u) \nonumber
\\
\le&
\begin{bmatrix}
\underline{U} & \overline{U}
\end{bmatrix} 
\begin{bmatrix}
\underline{e}  
\\
\overline{e}
\end{bmatrix}
+ (\underline{V}+\overline{V})(\overline{u}-\underline{u}) . \label{eq:thm3_pf_5a}
\end{align}

Therefore, one has
\begin{align} \label{eq:thm3_pf_6}
    \dot V(\underline{e},\overline{e}) \le \begin{bmatrix}
    \underline{e}^{\top} & \overline{e}^{\top}
    \end{bmatrix}\begin{bmatrix}
    \underline{\Theta} +\underline{U}^{\top} 
    \\
     \overline{\Theta}+\overline{U}^{\top}
    \end{bmatrix} v +\theta
\end{align}
where $\theta =(\overline{u}^{\top}-\underline{u}^{\top})(\underline{V}^{\top}+\overline{V}^{\top})v+(\underline{\rho}^{\top}+\overline{\rho}^{\top})v$.

%Due to (\ref{eq:thm_3_4}) and $v=X\mathbf{1}_{n_x \times 1}$, there always exists a sufficient small $\lambda > 0$ such that 
Due to (\ref{eq:thm_3_4}) and $v=X\mathbf{1}_{n_x \times 1}$, it leads to
\begin{align} \label{eq:thm3_pf_7}
  \begin{bmatrix}
    \underline{\Theta} +\underline{U}^{\top} 
    \\
     \overline{\Theta}+\overline{U}^{\top}
    \end{bmatrix} v<  0
\end{align}
which implies that there always exists a sufficient small $\lambda > 0$ such that
\begin{align}
  \begin{bmatrix}
    \underline{\Theta} +\underline{U}^{\top} 
    \\
     \overline{\Theta}+\overline{U}^{\top}
    \end{bmatrix} v <- \lambda 
    \begin{bmatrix}
    I_{n_x \times n_x}
    \\
    I_{n_x \times n_x} 
    \end{bmatrix} v
\end{align}
and that implies
\begin{align}\label{eq:thm3_pf_8}
    \dot V(\underline{e},\overline{e}) \le -\lambda \underline{e}^{\top} v -\lambda \overline{e}^{\top} v + \theta =     - \lambda V(\underline{e},\overline{e}) + \theta .
\end{align}

Defining $\underline{v}$ and $\overline{v}$ the minimal and maximal element in $v$, (\ref{eq:thm3_pf_8}) implies that
\begin{align*}
   \underline{v}\left\|\xi\right\|  \le e^{-\lambda(t-t_0)} \overline{v} \left\|\xi_0\right\|+ \frac{\theta}{\lambda}
    \Rightarrow  \left\|\xi\right\|  \le Ce^{-\lambda(t-t_0)}  \left\|\xi_0\right\|+r
\end{align*}
where $\xi^{\top} = [\underline{e}^{\top}, \overline{e}]^{\top}$, $C= {\overline{v}}\slash{\underline{v}}$ and $r =  {\theta}\slash{\lambda \underline{v}}$. Therefore, the error system is globally practically uniformly exponentially stable, the error state converges to  ball $\mathcal{B}_r = \{x \in \mathbb{R}^{n_x} \mid \left\|x\right\| \ge r, r \ge 0\}$  exponentially at a decay rate of $\lambda$. The proof is complete. 
\end{proof}

The design process of observer gains $\underline{L}$ and $\overline{L}$ allows one to use coordinate transformation techniques used in several works such as  \cite{efimov2016design,raissi2011interval,li2019interval} to relax the conditions of both Mezleter and Hurwitz conditions being satisfied.  Based on Theorem \ref{thm3}, an design algorithm is proposed in Algorithm \ref{alg1}. The outputs of the algorithm, the auxiliary neural networks $\underline{\Phi}$, $\overline{\Phi}$ and observer gains $\underline{L}$, $\overline{L}$, are able to ensure the run-time boundedness as well as the convergence of the error state in terms of practical stability. 

\begin{algorithm}[ht!]
\SetAlgoLined
\SetKwInOut{Input}{Input}
\SetKwInOut{Output}{Output}
\SetKw{Return}{return}
Compute $\underline{W}_\ell$, $\overline{W}_\ell$, $\ell = 1,\ldots,L$ by (\ref{eq:W_1}), (\ref{eq:W_2}) and obtain neural networks $\underline{\Phi}$, $\overline{\Phi}$\;
Solve LP problem (\ref{eq:LP}) to obtain $\underline{S}_0$ and $\overline{S}_0$ \;
Compute $\underline{U}$, $\overline{U}$ by $\underline{S}_0 = [\underline{U},~\underline{V}]$, $\overline{S}_0 =[\overline{U},~\overline{V}]$ where $\underline{U},\overline{U} \in \mathbb{R}^{n_x \times n_x}$, $\underline{V},\overline{V} \in \mathbb{R}^{n_x \times n_u}$ \;
Solve LP problem (\ref{eq:thm_3_1})-(\ref{eq:thm_3_4}) to obtain  $X$, $\underline{Y}$, $\overline{Y}$ \;
Compute observer gains $\underline{L}$, $\overline{L}$. 
\caption{Run-Time Safety State Estimator Design} \label{alg1}
\end{algorithm}

A numerical example is proposed to illustrate the design process of Algorithm \ref{alg1}. 
\begin{example}
Consider a neural-network-enabled system in the form of $\dot x = Ax+\Phi(x,u)$, $y =Cx$, where system matrices are
\begin{align*}
    A = \begin{bmatrix}
    -2 & 1
    \\
    3 & -5
    \end{bmatrix},~C = \begin{bmatrix}
    0 & 1
    \end{bmatrix}
\end{align*}
and neural network $\Phi$ is determined by 
\begin{align*}
    & W_1 = \begin{bmatrix}
    0.6266  &   0.8433 &   0.3241 \\
   -0.2485  & -1.5838  & -0.5620 \\
    0.5243  & -1.4939  &  1.1992 \\
   -0.4300  & -1.4659  &  0.1102 \\
    0.2629  &  0.6789 &  -1.2695
    \end{bmatrix},~
    b_1 = \begin{bmatrix}
    -1.0191 \\
   -1.3852 \\
    0.9549 \\
   -0.6011 \\
   -1.1719 
    \end{bmatrix}
    \\
     &   W_2^{\top} = \begin{bmatrix}
        -0.4617 &   -0.6691  \\
    0.6824 &  0.3819  \\
    0.2419  &  0.3326  \\
    0.0344 &  -0.7591  \\
    0.4333 &  -0.6569
    \end{bmatrix},~
    b_2 = \begin{bmatrix}
     -1.0719
     \\
   -1.0741
    \end{bmatrix} 
\end{align*}
and activation functions are \texttt{tanh} and \texttt{purelin}. 

\emph{Step 1. Design Auxiliary Neural Networks:} By (\ref{eq:W_1}) and (\ref{eq:W_2}), matrices $\underline{W}_\ell$, $\overline{W}_\ell$, $\ell = 1, 2$ are as follows:
\begin{align*}
    &\underline{W}_1 = \begin{bmatrix}
            0    &     0     &    0  \\
   -0.2485 & -1.5838 &  -0.5620 \\
         0 &  -1.4939     &   0  \\
   -0.4300 &  -1.4659    &     0 \\
         0 &        0  & -1.2695
    \end{bmatrix}
    \\
    &\overline{W}_1 = \begin{bmatrix}
      0.6266 &    0.8433  &   0.3241 \\
         0   &       0   &    0 \\
    0.5243    &      0    1.1992 \\
         0  &        0  &   0.1102\\
    0.2629 &    0.6789  &        0
        \end{bmatrix}
        \\
       & \underline{W}_2 = 
        \begin{bmatrix}
   -0.4617    &     0  &      0       &  0     &    0
   \\
   -0.6691    &     0   &      0  & -0.7591  & -0.6569
        \end{bmatrix}
        \\
          & \overline{W}_2 = 
        \begin{bmatrix}
           0  & 0.6824 &   0.2419  &  0.0344 &   0.4333 \\
         0  &  0.3819  &  0.3326    &     0   &      0
        \end{bmatrix} .
\end{align*}

\emph{Step 2: Design Observer Gains:} By (\ref{eq:thm_3_1})--(\ref{eq:thm_3_6}), the observer gains are computed as 
\begin{align*}
    \underline{L} = \begin{bmatrix}
    0 \\
   12.0394
    \end{bmatrix},~
        \overline{L} = \begin{bmatrix}
    1 \\
   8.0044
    \end{bmatrix} .
\end{align*}

Assuming input $u = 10\sin(5t)$, thus we have $\underline{u} = -10$ and $\overline{u} = 10$. The initial state is assumed to be bounded in $[-1,1]$. The run-time safety monitoring of system state $x_1(t)$ and $x_2(t)$ is illustrated in Figure \ref{fig1}. The run-time state trajectories $x_1(t)$, $x_2(t)$ are bounded in run-time estimated states $\underline{x}_i(t)$, $\overline{x}_i(t)$, $i =1,2$, in other words, the safety of system state $x(t)$ can be monitored by $\underline{x}_i(t)$, $\overline{x}_i(t)$, $i =1,2$ in run time.  
\begin{figure}\label{fig1}
\centering
	\includegraphics[width=9.5cm]{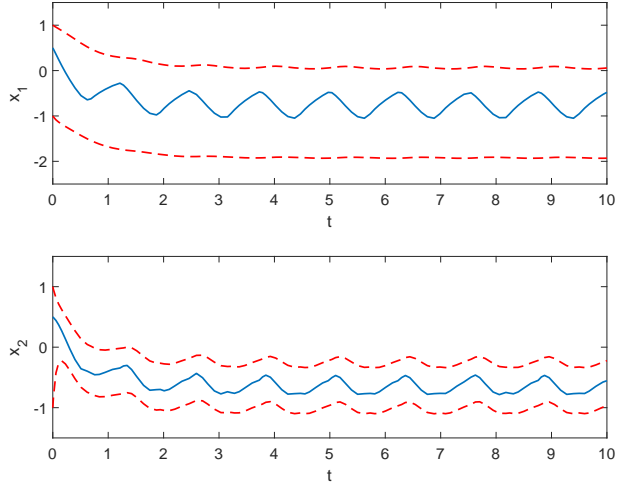}
	\caption{State response of $x(t)$ (solid lines) and run-time safety monitoring of $\underline{x}(t)$ and $\overline{x}(t)$ (dashed lines). State response $x(t)$ is bounded between states $\underline{x}(t)$, $\overline{x}(t)$ of state estimator in run time.}
	\label{x} 
\end{figure}
\end{example}

As one of the most common neural-network-enabled dynamical systems, the neural network is driven by the measurement of the system, which means the input of the neural network is the measurement of the system $y(t)$ instead of system state $x(t)$. For instance, neural network control systems use the measurement $y(t)$ to compute the control input instead of system state $x(t)$ since system state $x(t)$ may not be measurable. This class of systems with neural network $\Phi(y,u)$ are in the following description of
\begin{align}\label{eq:Lipsystem_y}
\mathfrak{L}_y:
\begin{cases}
    \dot x  = Ax + f(x) + \Phi(y,u)
    \\
    y = Cx
\end{cases}
\end{align}
where neural network $\Phi(y,u)$ is measured output driven. Since the output $y(t)$ is measurable in run time, it can be employed in the safety monitoring. The run-time safety state estimator is developed in the form of
\begin{align}
\label{eq:observer_y}
\mathfrak{E}_y:
\begin{cases}
    \dot{\underline{x}}  = A\underline{x} + \underline{f}(\underline{x},\overline{x}) + \underline{\Phi}(y,\underline{u},\overline{u})+\underline{L}(y - C\underline{x})
    \\
    \dot{\overline{x}}  = A\overline{x} + \overline{f}(\underline{x},\overline{x}) + \overline{\Phi}(y,\underline{u},\overline{u})+\overline{L}(y - C\overline{x})
\end{cases}
\end{align}
where $\underline{\Phi}$ and $\overline{\Phi}$ are defined by (\ref{eq:nn_1}) and (\ref{eq:nn_2}) with $\underline{y}=\overline{y}=y$, respectively. Consequently, the error dynamics is in the form of 
\begin{align}
\label{eq:errorsystem_y}
\begin{cases}
    \dot{\underline{e}}  = (A-\underline{L}C)\underline{e} + f(x) - \underline{f}(\underline{x},\overline{x}) + \Phi(y,u)-\underline{\Phi}(y,\underline{u},\overline{u})
    \\
    \dot{\overline{e}}  = (A-\overline{L}C)\overline{e} + \overline{f}(\underline{x},\overline{x}) - f(x) + \overline{\Phi}(y,\underline{u},\overline{u})-\Phi(y,u)
\end{cases}
\end{align}

The following result represents the observer gain design process in (\ref{eq:observer_y}). 
\begin{corollary} \label{cor_1}
Consider error system (\ref{eq:errorsystem_y}), if there exist a diagonal matrix $X \in \mathbb{R}^{n_x \times n_x}$, matrices $\underline{Y},\overline{Y} \in \mathbb{R}^{n_x \times n_y}$ and a scalar $a \in \mathbb{R}$ such that 
\begin{align}
\label{eq:cor_1_1}
X \mathbf{1 }_{n_x \times 1} &> 0 
\\
\label{eq:cor_1_2}
XA-\underline{Y}C &> aI_{n_x \times n_x}
\\
\label{eq:cor_1_3}
XA-\overline{Y}C &> aI_{n_x \times n_x}
\\
\label{eq:cor_1_4}
\begin{bmatrix}
A^{\top}X-C^{\top}\underline{Y}^{\top} +\underline{\gamma} X
\\
A^{\top}X-C^{\top}\overline{Y}^{\top} +\overline{\gamma} X
 \end{bmatrix} \mathbf{1}_{n_x \times 1}& < 0
\end{align}
where $\underline{\gamma} = \underline{\gamma}_1+\underline{\gamma}_2$, $\overline{\gamma} = \overline{\gamma}_1+\overline{\gamma}_2$, then the error system (\ref{eq:errorsystem_y}) is globally practically uniformly exponentially stable with observer gains $\underline{L}=X^{-1}\underline{Y}$ and $\overline{L} = X^{-1}\overline{Y}$.
\end{corollary}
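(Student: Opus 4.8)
The plan is to mirror the proof of Theorem \ref{thm3}, exploiting the structural simplification that arises because the neural network in \eqref{eq:Lipsystem_y} is driven by the measured output $y$ rather than the state. First I would establish that the error states remain in the nonnegative orthant: since $X$ is a positive diagonal matrix by \eqref{eq:cor_1_1}, the inequalities \eqref{eq:cor_1_2}, \eqref{eq:cor_1_3} (with $\underline{L}=X^{-1}\underline{Y}$, $\overline{L}=X^{-1}\overline{Y}$) force the off-diagonal entries of $A-\underline{L}C$ and $A-\overline{L}C$ to be nonnegative, i.e.\ $A-\underline{L}C,\,A-\overline{L}C\in\mathbb{M}_{n_x}$; and Theorem \ref{thm_1}, applied with $\underline{\eta}_0=[y^\top,\underline{u}^\top]^\top$ and $\overline{\eta}_0=[y^\top,\overline{u}^\top]^\top$ (so that trivially $\underline{\eta}_0\le[y^\top,u^\top]^\top\le\overline{\eta}_0$), yields $\Phi(y,u)-\underline{\Phi}(y,\underline{u},\overline{u})\in\mathbb{R}_+^{n_x}$ and $\overline{\Phi}(y,\underline{u},\overline{u})-\Phi(y,u)\in\mathbb{R}_+^{n_x}$. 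Together with Assumption \ref{assumption_2} and Proposition \ref{proposition_1}, this gives $\underline{e}(t)\ge0$ and $\overline{e}(t)\ge0$ for all $t\ge t_0$, so the co-positive Lyapunov machinery applies to \eqref{eq:errorsystem_y}.

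Next I would reuse the Lyapunov candidate $V(\underline{e},\overline{e})=\underline{e}^\top v+\overline{e}^\top v$ with $v=X\mathbf{1}_{n_x\times1}>0$. Differentiating along \eqref{eq:errorsystem_y} and bounding the nonlinearity-mismatch terms $f(x)-\underline{f}(\underline{x},\overline{x})$ and $\overline{f}(\underline{x},\overline{x})-f(x)$ via Assumption \ref{assumption_3}, exactly as in \eqref{eq:thm3_pf_1}--\eqref{eq:thm3_pf_5}, produces
\begin{align*}
\dot V(\underline{e},\overline{e})\le \begin{bmatrix}\underline{e}^\top&\overline{e}^\top\end{bmatrix}\begin{bmatrix}\underline{\Theta}\\\overline{\Theta}\end{bmatrix}v+(\underline{\rho}^\top+\overline{\rho}^\top)v+\mathcal{G}^\top v,
\end{align*}
with $\underline{\Theta}=A^\top-C^\top\underline{L}^\top+\underline{\gamma}I_{n_x\times n_x}$, $\overline{\Theta}=A^\top-C^\top\overline{L}^\top+\overline{\gamma}I_{n_x\times n_x}$, and $\mathcal{G}=\overline{\Phi}(y,\underline{u},\overline{u})-\underline{\Phi}(y,\underline{u},\overline{u})$.

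The step that distinguishes this corollary from Theorem \ref{thm3} — and the only one that needs real care — is the bound on $\mathcal{G}$. Applying Lemma \ref{lemma_2} with input interval $[[y^\top,\underline{u}^\top]^\top,\,[y^\top,\overline{u}^\top]^\top]$ and writing $\underline{S}_0=[\underline{U},\underline{V}]$, $\overline{S}_0=[\overline{U},\overline{V}]$, the first $n_x$ coordinates of $\eta_0-\underline{\eta}_0$ and of $\overline{\eta}_0-\eta_0$ vanish, because the state-like argument equals the measured $y$ with zero interval width; hence $\mathcal{G}\le\underline{V}(u-\underline{u})+\overline{V}(\overline{u}-u)\le(\underline{V}+\overline{V})(\overline{u}-\underline{u})$, a constant vector with no $\underline{e}$- or $\overline{e}$-dependence. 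This is precisely why the terms $\underline{U}^\top X$ and $\overline{U}^\top X$ present in \eqref{eq:thm_3_4} are absent from \eqref{eq:cor_1_4}, and why $\underline{S}_0,\overline{S}_0$ are needed only to size the ultimate bound and not for the feasibility conditions — the gain design is independent of the network. Substituting back gives $\dot V\le\begin{bmatrix}\underline{e}^\top&\overline{e}^\top\end{bmatrix}\begin{bmatrix}\underline{\Theta}\\\overline{\Theta}\end{bmatrix}v+\theta$ with $\theta=(\overline{u}^\top-\underline{u}^\top)(\underline{V}^\top+\overline{V}^\top)v+(\underline{\rho}^\top+\overline{\rho}^\top)v\ge0$ constant.

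Finally, noting that with $\underline{Y}=X\underline{L}$, $\overline{Y}=X\overline{L}$ and $v=X\mathbf{1}_{n_x\times1}$ the left-hand side of \eqref{eq:cor_1_4} equals exactly $\begin{bmatrix}\underline{\Theta}^\top&\overline{\Theta}^\top\end{bmatrix}^\top v$, condition \eqref{eq:cor_1_4} yields $\begin{bmatrix}\underline{\Theta}^\top&\overline{\Theta}^\top\end{bmatrix}^\top v<0$, so there is a small $\lambda>0$ with $\begin{bmatrix}\underline{\Theta}^\top&\overline{\Theta}^\top\end{bmatrix}^\top v\le-\lambda\begin{bmatrix}I_{n_x\times n_x}&I_{n_x\times n_x}\end{bmatrix}^\top v$, whence $\dot V\le-\lambda V+\theta$. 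Integrating this scalar differential inequality and using $\underline{v}\|\xi\|\le V\le\overline{v}\|\xi\|$ for $\xi=[\underline{e}^\top,\overline{e}^\top]^\top$ (with $\underline{v},\overline{v}$ the minimal and maximal entries of $v$) gives $\|\xi\|\le Ce^{-\lambda(t-t_0)}\|\xi_0\|+r$ with $C=\overline{v}/\underline{v}$ and $r=\theta/(\lambda\underline{v})$; that is, \eqref{eq:errorsystem_y} is globally practically uniformly exponentially stable with the error converging to $\mathcal{B}_r$. I do not anticipate a genuine obstacle: the entire content sits in the zero-width observation of the third paragraph, after which the argument is the verbatim computation of Theorem \ref{thm3} with the $\underline{U},\overline{U}$ feedback terms set to zero.
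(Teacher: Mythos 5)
Your proposal is correct and follows essentially the same route as the paper: the same co-positive Lyapunov function $V=\underline{e}^{\top}v+\overline{e}^{\top}v$ with $v=X\mathbf{1}_{n_x\times 1}$, the same observation that $\underline{y}=\overline{y}=y$ collapses the $\underline{U},\overline{U}$ terms in the Lemma \ref{lemma_2} bound so that $\mathcal{G}\le(\underline{V}+\overline{V})(\overline{u}-\underline{u})$, and the same conversion of \eqref{eq:cor_1_4} into $\dot V\le-\lambda V+\theta$. The only addition is your explicit preliminary check that \eqref{eq:cor_1_1}--\eqref{eq:cor_1_3} give the Metzler property and Theorem \ref{thm_1} gives nonnegativity of the error states, which the paper leaves implicit but which is indeed needed for the co-positive Lyapunov argument to be meaningful.
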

\begin{proof}
Construct a co-positive Lyapunov function candidate $V(\underline{e},\overline{e}) = \underline{e}^{\top}v+\overline{e}^{\top}v$ where $v = X\mathbf{1}_{n_x \times 1} \in \mathbb{R}^{n_x}_+$. Following the same guideline in Theorem \ref{thm3}, the following inequality can be obtained
\begin{align} \label{eq:cor_1_pf_5}
    \dot V(\underline{e},\overline{e}) 
    \le
    \begin{bmatrix}
    \underline{e}^{\top} & \overline{e}^{\top}
    \end{bmatrix}\begin{bmatrix}
    \underline{\Theta} 
    \\
     \overline{\Theta}
    \end{bmatrix}
    v + (\underline{\rho}^{\top}+\overline{\rho}^{\top})v
    + \mathcal{G}^{\top}v
\end{align}
where $\overline{\Theta} = A^{\top}-C^{\top}\overline{L}^{\top} +(\overline{\gamma}_1 +\overline{\gamma}_2)I_{n_x \times n_x}$, $\underline{\Theta} = A^{\top}-C^{\top}\underline{L}^{\top} +(\underline{\gamma}_1 +\underline{\gamma}_2)I_{n_x \times n_x}$, and $\mathcal{G}  = \overline{\Phi}(y,\underline{u},\overline{u})-\underline{\Phi}(y,\underline{u},\overline{u})$. 

Based on Lemma \ref{lemma_2} and using the fact of $\underline{y} = \overline{y} =y$ in $\underline{\Phi}$ and $\overline{\Phi}$, it leads to
\begin{align}
    \mathcal{G} \le &\begin{bmatrix}
\underline{U} & \underline{V}
\end{bmatrix} \left(\begin{bmatrix}
y
\\
u
\end{bmatrix} - \begin{bmatrix}
y
\\
\underline{u}
\end{bmatrix} 
\right)+
\begin{bmatrix}
\overline{U} & \overline{V}
\end{bmatrix}
\left(\begin{bmatrix}
y
\\
\overline{u}
\end{bmatrix} - \begin{bmatrix}
y
\\
{u}
\end{bmatrix} 
\right) \nonumber
\\
\le &
\begin{bmatrix}
\underline{U} & \overline{U}
\end{bmatrix} 
\begin{bmatrix}
0
\\
0
\end{bmatrix}
+ (\underline{V}+\overline{V})(\overline{u}-\underline{u})   \nonumber
\\
= & (\underline{V}+\overline{V})(\overline{u}-\underline{u}) 
\label{eq:cor_1_pf_5a}
\end{align}
which is irrelevant to $\underline{U}$ and $\overline{U}$. Then, we have
\begin{align} \label{eq:cor_1_pf_6}
    \dot V(\underline{e},\overline{e}) \le \begin{bmatrix}
    \underline{e}^{\top} & \overline{e}^{\top}
    \end{bmatrix}\begin{bmatrix}
    \underline{\Theta}  
    \\
     \overline{\Theta}
    \end{bmatrix} v +\theta
\end{align}
where $\theta =(\overline{u}^{\top}-\underline{u}^{\top})(\underline{V}^{\top}+\overline{V}^{\top})v+(\underline{\rho}^{\top}+\overline{\rho}^{\top})v$.

Due to (\ref{eq:cor_1_4}) and following the same guidelines in Theorem \ref{cor_1}, the following inequality can be derived
\begin{align}\label{eq:cor_1_pf_8}
    \dot V(\underline{e},\overline{e}) \le -\lambda \underline{e}^{\top} v -\lambda \overline{e}^{\top} v + \theta
    = 
    - \lambda V(\underline{e},\overline{e}) + \theta
\end{align}
which ensure the practical stability of error system. The proof is complete. 
\end{proof}
\begin{remark}
As Corollary \ref{cor_1} indicates, the design process of observer gain computation has nothing to do with the neural network. The observer gains $\underline{L}$ and $\overline{L}$ are obtained by solving an LP problem in terms of (\ref{eq:cor_1_1})--(\ref{eq:cor_1_4}) which is dependent upon system dynamics without considering neural network components. This is because the measurable output $y$ makes the portion of the output of neural network $\Phi$ driven by $y$ completely compensated by the outputs of auxiliary neural networks $\underline{\Phi}$, $\overline{\Phi}$ which are also driven by same values of  measurement $y$. This promising feature of irrelevance to neural networks leads this developed methods to be able to deal with dynamical systems with large-scale neural network components such as deep neural network controllers regardless of the size of neural networks.
\end{remark}

\section{Application to Adaptive Cruise Control Systems}
\begin{figure}
    \includegraphics[width=8.8cm]{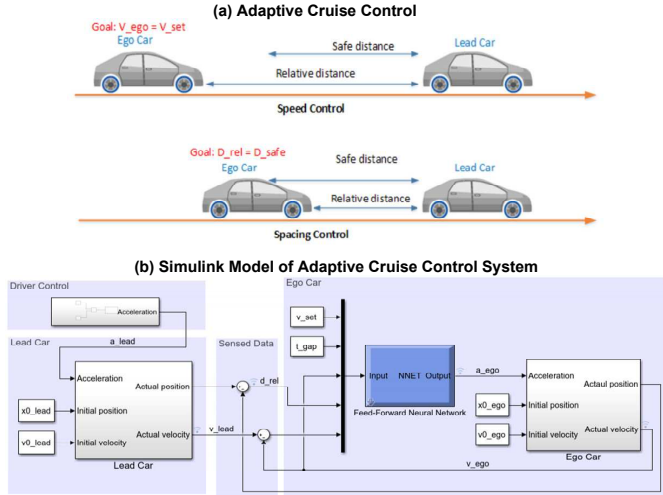}
    \caption{\boldmath  Illustration of adaptive cruise control systems and simulink block diagram of the closed-loop system.}
    \label{acc_sys} 
\end{figure}

In this section, the developed run-time safety monitoring approach will be evaluated  by an adaptive cruise control (ACC) system which is under control of a neural network controller as
depicted in Figure \ref{acc_sys}. Two cars are involved in the ACC system, an ego car with ACC module and a lead car. A radar sensor is equipped on the car to measure
the distance to the lead car in run time. The run-time measured distance is denoted by $d_{\mathrm{rel}}$. Moreover, the relative velocity
against the lead car is also measured in run time which is denoted by $v_{\mathrm{rel}}$. There are two system operating modes including speed control and spacing control. Two control modes are operating in run time. In speed control mode, the ego car travels at a speed of $v_{\mathrm{set}}$ set by the driver. In spacing control mode, the ego car has to maintain a safe distance from the lead car denoted by $d_{\mathrm{safe}}$. The system dynamics of ACC is expressed in the following form of
\begin{align}\label{acc}
\left\{ {\begin{array}{*{20}l}
    \dot x_{l}(t) = v_{l}(t)\\
    \dot v_{l}(t) = \gamma_{l}(t)\\
    \dot \gamma_{l}(t) = -2\gamma_{l}(t) + 2 \alpha_{l}(t) - \mu v^2_{l}(t)
    \\
    \dot x_{e} = v_{e}(t)
    \\
    \dot v_{e}(t) = \gamma_e(t)
    \\
    \dot \gamma_{e}(t) = -2\gamma_{e}(t)+2\alpha_{e}(t)-\mu v^{2}_{e}(t)
    \end{array} } \right.
\end{align}
where $x_l$, $v_l$ and $\gamma_l$ are the position, velocity and actual acceleration of the lead car, and $x_e$, $v_e$ and $\gamma_e$ are the position, velocity and actual acceleration of the ego car, respectively.  $\alpha_l$ and $\alpha_e$ is the acceleration
control inputs applied to the lead and ego car. $\mu = 0.0001$ is the friction parameter. A $2 \times 20$ feedforward neural network controller with $\texttt{tanh}$ and $\texttt{purelin}$ is trained for the ACC system. Specifically, the measurement of the ACC system which also performs as the inputs to the neural network ACC control module are listed in Table \ref{tab1}.
\begin{table}[h!]
\centering
\caption{Measured Outputs of ACC system and Inputs to Neural Network Controller}
\label{tab1}
\begin{tabular}{ |l||l| }
\hline
Driver-set velocity & $v_{\mathrm{set}}$  \\  
 \hline
Time gap & $t_{\mathrm{gap}}$ \\  
 \hline
Velocity of the ego car & $v_e$ \\  
 \hline
Relative distance to the lead car & $d_{\mathrm{rel}} = x_l - x_e$\\ 
 \hline
 Relative velocity to the lead car & $v_{\mathrm{rel}} = v_l - v_e$\\ 
 \hline
\end{tabular}
\end{table}

The output for the  neural network ACC controller is the acceleration of the ego car, namely $\alpha_e$. In summary, the neural network controller for the acceleration control of the ego car is in the form of
\begin{align}\label{sam_acc}
\alpha_{e}(t) = \Phi( d_{\mathrm{rel}}(t),v_{\mathrm{rel}}(t),v_{e}(t),v_{\mathrm{set}}(t),t_{\mathrm{gap}}).
\end{align}

Letting $x = [x_l,v_l,\gamma_l,x_e,v_e,\gamma_e]^{\top}$, $u = [\alpha_l,v_{\mathrm{set}},t_{\mathrm{gap}}]^{\top}$ and $y = [v_e,d_{\mathrm{rel}},v_{\mathrm{rel}}]^{\top}$, the ACC system can be rewritten in the following neural-network-enabled system
\begin{align}
\begin{cases}
    \dot x = Ax+f(y)+\tilde \Phi(y,u)
    \\
    y = Cx
\end{cases}
\end{align}
where system matrices $A$, $C$, nonlinearity $f(y)$ and neural network component $\Phi(y,u)$ are defined as below:
\begin{align*}
    A &= 
     \begin{bmatrix}
     0 & 1 & 0 & 0 & 0 & 0 
     \\
     0 & 0 & 1 & 0 & 0 & 0 
     \\
     0 & 0 & -2 & 0 & 0 & 0 
     \\
     0 & 0 & 0 & 0 & 1 & 0 
     \\
     0 & 0 & 0 & 0 & 0 & 1
     \\
     0 & 0 & 0 & 0 & 0 & -2
    \end{bmatrix}
    \\
    C &= \begin{bmatrix}
    0 & 0 & 0 & 0 & 1 & 0 \\
     1 & 0 & 0 & -1 & 0 & 0 \\
     0 & 1 & 0 & 0 & -1 & 0
    \end{bmatrix}
    \\
    f(y) &= \begin{bmatrix}0 & 0 & -0.0001 (y_1+y_3)^2 & 0 & -0.0001 y_1^2
    \end{bmatrix}^{\top}
    \\
    \tilde \Phi(y,u) &= 
    \begin{bmatrix}
    0 & 0 & 2u_1 & 0 &0 &  \Phi(y,u_2,u_3) 
    \end{bmatrix}^{\top} 
\end{align*}
where $\Phi(y,u_2,u_3)$ is the neural network controller (\ref{sam_acc}). 
In addition, considering the physical limitations of the vehicle dynamics, the acceleration is constrained to the range $[-3,2]$ ($m/s^2$), thus input $u_1 \in [-3,2]$. 

\weiming{Since the nonlinearity of $f(y)$ can be obtained with the measurement of $y$, we can let $f(y) = \underline{f}(\underline{x},\overline{x}) = \overline{f}(\underline{x},\overline{x})$ in state estimator (\ref{eq:observer_y}),} which is thus constructed as follows
\begin{align}
\label{}
\begin{cases}
    \dot{\underline{x}}  = A\underline{x} + f(y) + \underline{\Phi}(y,\underline{u},\overline{u})+\underline{L}(y - C\underline{x})
    \\
    \dot{\overline{x}}  = A\overline{x} +f(y) + \overline{\Phi}(y,\underline{u},\overline{u})+\overline{L}(y - C\overline{x})
\end{cases} .
\end{align}

The auxiliary neural networks $\underline{\Phi}$ and $\overline{\Phi}$ are designed based on $\Phi$ according to (\ref{eq:nn_1}) and (\ref{eq:nn_2}). The observer gains $\underline{L}$ and $\overline{L}$ can be computed by Corollary \ref{cor_1} via solving a collection of LP problems. %which are listed below: \begin{align}
%    \underline{L} = \begin{bmatrix}
%    0  &   1  &   0 \\
%      0  &   0  &   2 \\
%      0   &  0  &   0 \\
%      1 &    0  &   0 \\
%      3  &  0 &    0 \\
% %      0  &   0   &  0 \\
%     \end{bmatrix},~\overline{L} = \begin{bmatrix}
%      1  &   1  &   1 \\
%      0   &  0  &   1 \\
%      0  &   0  &   0\\
%      1  &   0  &   0 \\
%      2 &    0  &   0 \\
%      0  &   0  &   0 
%      \end{bmatrix}
% \end{align}

\begin{figure}
    \includegraphics[width=9.5cm]{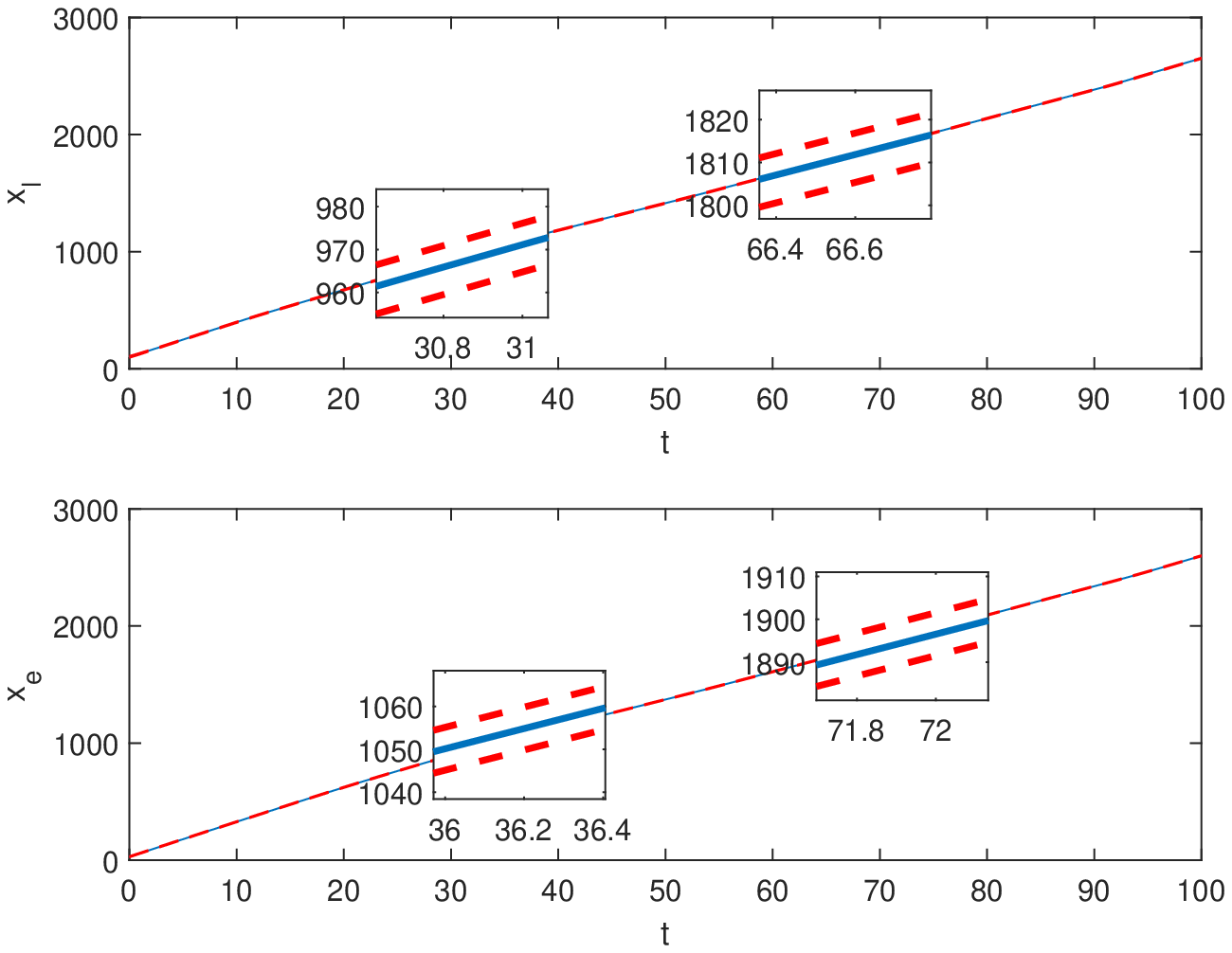}
    \caption{Run-time safety monitoring of positions $x_l(t)$ and $x_e(t)$. The state trajectories $x(t)$ (solid lines) are bounded within the estimated bounds $\underline{x}(t)$, $\overline{x}(t)$ (dashed lines). Magnified time windows are used for clear clarification. }
    \label{acc_x} 
\end{figure}
\begin{figure}
    \includegraphics[width=9.5cm]{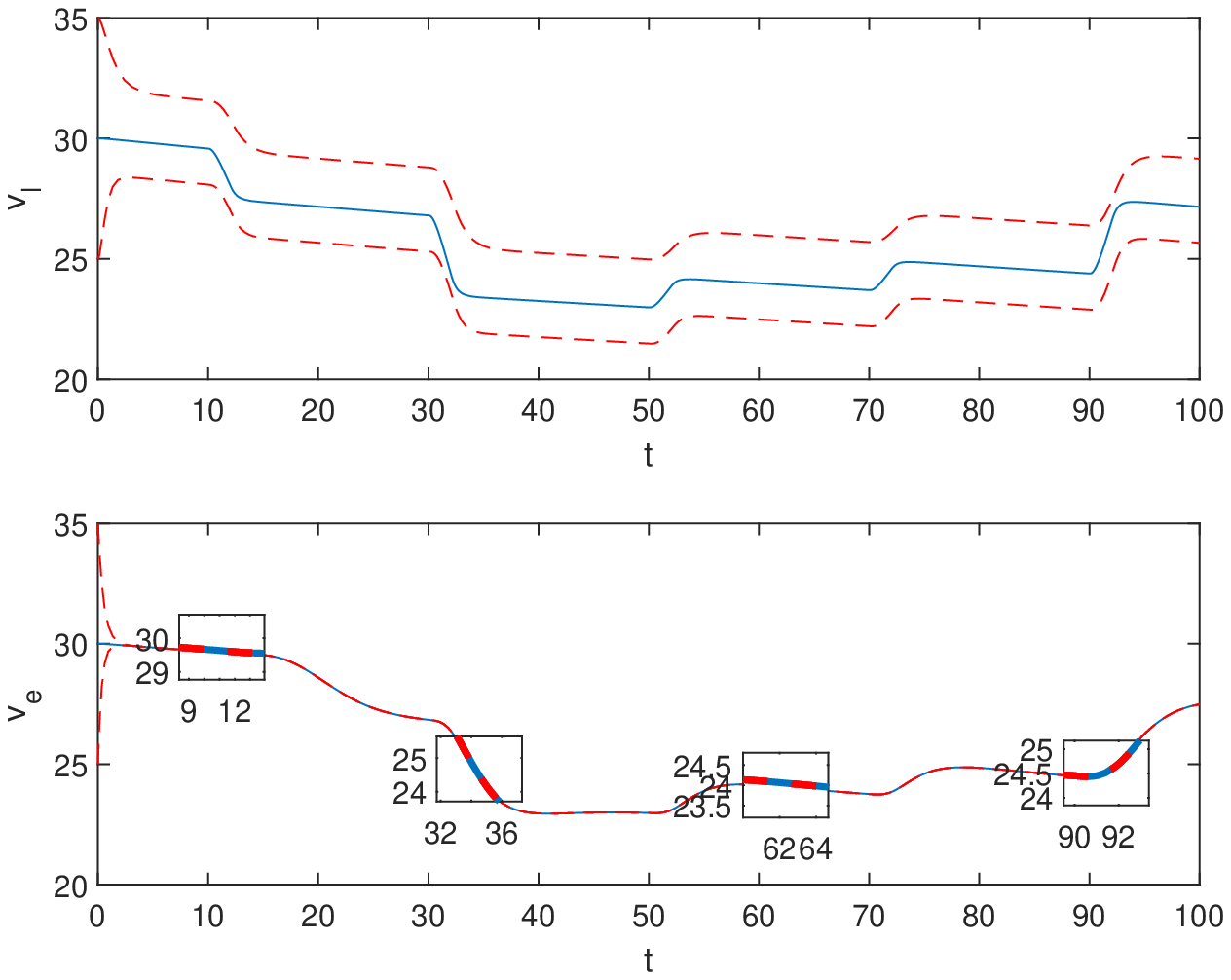}
    \caption{Run-time safety monitoring of velocities $v_l(t)$ and $v_e(t)$. The state trajectories $v(t)$ (solid lines) are bounded within the estimated bounds $\underline{v}(t)$, $\overline{v}(t)$ (dashed lines). Magnified time windows are used for clear clarification.}
    \label{acc_v} 
\end{figure}
\begin{figure}
    \includegraphics[width=9.5cm]{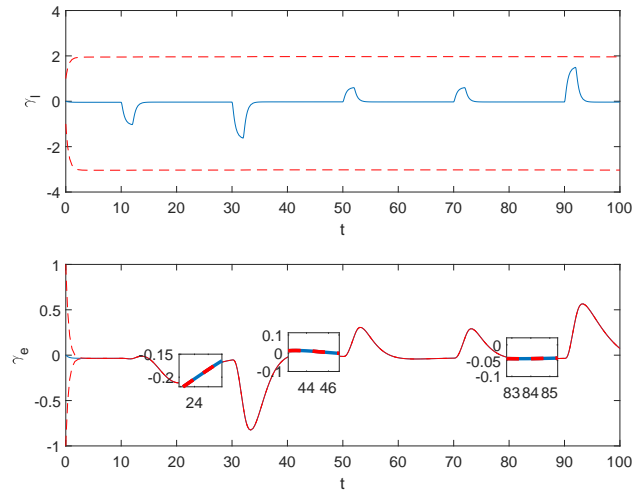}
    \caption{Run-time safety monitoring of accelerations $\gamma_l(t)$ and $\gamma_e(t)$. The state trajectories $\gamma(t)$ (solid lines) are bounded within the estimated bounds $\underline{\gamma}(t)$, $\overline{\gamma}(t)$ (dashed lines). Magnified time windows are used for clear clarification.}
    \label{acc_a} 
\end{figure}

The run-time boundary estimations of state trajectories of positions $\{x_l(t),x_e(t)\}$, velocities $\{v_l(t),v_e(t)\}$ and accelerations $\{\gamma_l(t),\gamma_e(t)\}$ during ACC system evolves in time interval $[0,100]$ are shown in Figures \ref{acc_x}, \ref{acc_v} and \ref{acc_a}. As shown in these simulation results, the state trajectories are always bounded within the lower  and  upper-bounds of observers which can be used as a run-time safety monitoring state for  system state during operation.
%\footnote{A video of run-time safety monitoring results of Figures \ref{acc_x}--\ref{acc_a} is uploaded along with the submission as supporting materials.}. 

\section{Conclusions}
The run-time safety monitoring problem of neural-network-enabled dynamical systems is addressed in this paper. The online lower- and upper-bounds of state trajectories can be provided by the run-time safety estimator in the form of interval Luenberger observer form. The design process includes two essential parts, namely two auxiliary neural networks and observer gains. In summary, two auxiliary neural networks are derived from the neural network component embedded in the original dynamical system and observer gains are computed by a family of LP problems. Regarding neural networks driven by measurements of the system, it is noted that the design process is independent with the neural network so that there is no scalability concern for the size of neural networks. An application to ACC system is presented to validate the developed method. Further applications to complex dynamical systems such as systems with switching behaviors \cite{zhu2019quasi,xiang2016necessary,zhang2016mode,xiang2017output,xiang2012robust} should be considered in the future. Beyond the run-time safety monitoring approach developed in this paper, the future work should be the run-time correction of neural networks once the unsafe behavior is detected. Moreover, 
run-time safety-oriented training of neural networks such as online neural network controller training with safety guarantees should be also considered based on the techniques developed in this paper.

\bibliographystyle{IEEEtran}

\bibliography{ref}

\end{document}